\documentclass{article}
 
\usepackage{amssymb}
\usepackage{amsmath}
\usepackage{amsthm}
\usepackage{color}
\usepackage{graphicx}
\usepackage[algo2e,linesnumbered,vlined,lined,ruled]{algorithm2e}
\usepackage{float}
\usepackage{subfigure}
\usepackage{fullpage}

\floatstyle{ruled}
\restylefloat{figure}
\restylefloat{table}

\newtheorem{definition}{Definition}
\newtheorem{lemma}{Lemma}
\newtheorem{theorem}{Theorem}

\sloppy

\begin{document}

\title{A Review on the Tree Edit Distance Problem and Related Path-Decomposition Algorithms
}

\author{
Shihyen Chen\footnote{
Department of Computer Science,
University of Western Ontario,
London, Ontario, Canada, N6A 5B7. \newline
Email: shihyen\_c@yahoo.ca.}
}
\maketitle

\begin{abstract}
An ordered labeled tree is a tree in which the nodes are labeled
and the left-to-right order among siblings is relevant. The edit
distance between two ordered labeled trees is the minimum cost of
changing one tree into the other through a sequence of edit
steps.
In the literature, there are a class of algorithms based on different yet closely related
path-decomposition schemes. 
This article reviews the principles
of these algorithms, and studies the
concepts related to the algorithmic complexities as a consequence of
the decomposition schemes. 
\end{abstract}
%
\section{Introduction} \label{sec:introduction}
An ordered labeled tree is a tree in which the nodes are labeled
and the left-to-right order among siblings is significant. 

The tree edit distance metric was introduced by Tai as a
generalization of the string editing problem~\cite{StringED}.
Given two trees $T_1$ and $T_2$, the tree edit distance between
$T_1$ and $T_2$ is the minimum cost to change one tree into
the other by a
sequence of edit steps. Tai~\cite{Tai} gave an algorithm with a time complexity of
$O(|T_1|^3 \times |T_2|^3)$.
Subsequently, a number of improved algorithms were 
developed~\cite{Zhang:Shasha:TreeEditDistance:SIAMJC1989,klein:98,wchen:2001,DMRW2009,TED:Pawlik-Augsten:2012,TED:Pawlik-Augsten:2014,Chen-Zhang:TED:2014}.
Bille~\cite{Bille:2005} presented a survey on the tree edit distance algorithms.
This article focuses on a class of algorithms that are based on closely related dynamic programming approaches, developed by
Zhang and Shasha~\cite{Zhang:Shasha:TreeEditDistance:SIAMJC1989}, 
Klein~\cite{klein:98}, and
Demaine \textit{et al.}~\cite{DMRW2009}, with time complexities of
$O(|T_1| \times |T_2| \times \prod_{i=1}^{2} \min\{depth(T_i),\#leaves(T_i)\})$,
$O(|T_1|^2 \times |T_2| \times \log |T_2|)$, and
$O(|T_1|^2 \times |T_2| \times (1 + \log \frac{|T_2|}{|T_1|}))$,
respectively.
The essential features common in these algorithms are:
\begin{enumerate}
\item a postorder enumeration of the subproblems, 
\item the recursive partitioning
of trees into disjoint paths, each associated with a separate subtree-subtree distance 
computation. 
\end{enumerate}
The notions related to these paths as a result of the recursive partitioning were formalized by
Dulucq and Touzet~\cite{Dulucq:Touzet:2005}, and referred to as 
``decomposition strategies''.
The algorithm by Demaine \textit{et al.} yields the best worst-case time complexity. They also showed that there exist  
trees for which 
$\Omega(|T_1|^2 \times |T_2| \times (1 + \log \frac{|T_2|}{|T_1|}))$ time is required to compute
the distance no matter what strategy is used.

In this article, we review and study the concepts underlying various
algorithmic approaches based on 
``decomposition strategies'' as well as their impacts on
the time complexity in computing the tree edit distance. 

The article is organized as follows.
Section~\ref{sec:ted} introduces the problem of tree edit distance,
and gives some initial solutions based on naive strategies.
Section~\ref{sec:strategies} presents improved strategies, focusing
on the conceptual aspects related to the time complexities. 
Section~\ref{sec:conclusion} gives concluding remarks.

\section{Preliminaries} \label{sec:ted}

Before we study the tree edit distance problem, it would be
beneficial to recall the solution for string edit distance
because the tree problem is a generalization
of the string problem, and the solution for the tree problem
may be constructed in ways analogous to the string problem. 
The string edit distance $d(S_1, S_2)$ can be solved by Equation~\ref{eqn:string-dist}
where $u$ and $v$ may be both the last elements or the first elements
of $(S_1, S_2)$. The three basic edit steps are substitution,
deletion, and insertion, with respective costs being
$\delta(u, v)$, $\delta(u, \varnothing)$, and $\delta(\varnothing, v)$.

\begin{definition}[String Edit Distance]
The edit distance $d(S_1, S_2)$ between two strings $S_1$ and $S_2$ is the minimum
cost to change $S_1$ to $S_2$ via a sequence of basic edit steps. 
\end{definition}

\begin{equation} \label{eqn:string-dist}
d(S_1, S_2) = 
    \min\left\{
        \begin{array}{l}
          d(S_1 - u, S_2) + \delta(u, \varnothing), \\
          d(S_1, S_2 - v) + \delta(\varnothing, v), \\
          d(S_1 - u, S_2 - v) + \delta(u, v) 
        \end{array}
        \right\} \enspace .
\end{equation}

We now turn to the tree edit distance.
First, we define some basic notations that will be useful
in the rest of the article.
\par
Given a tree $T$, we denote by $r(T)$ its root and $t[i]$ the $i$th node in $T$. 
The subtree rooted at $t[i]$ is denoted by $T[i]$. 
Denote by $F \circ G$ the left-to-right concatenation of $F$ and $G$.
The notation $F - T$ represents the structure resulted from removing
$T$ from $F$.

\begin{definition}[Tree Edit Distance]
The edit distance $d(T_1, T_2)$ between two trees $T_1$ and $T_2$ is the minimum
cost to change $T_1$ to $T_2$ via a sequence of basic edit steps. 
\end{definition}

Analogous to string editing, there are three basic edit operations on a tree: substitution of which
the cost is $\delta(t_1, t_2)$, insertion of which the cost is
$\delta(\varnothing, t_2)$, and
deletion of which the cost is $\delta(t_1, \varnothing)$.
The substitution operation substitutes a tree node with another one.
The insertion operation inserts a node into a tree. 
If the inserted node is made a child of some node in the tree,
the children of this node become the children of the inserted node.
The deletion operation deletes a node from a tree, and the children
of the deleted node become the children of the parent of the deleted
node.
These operations are displayed in Figure~\ref{fig:tree_edit}.

\begin{figure}[htbp]
  \begin{center} \subfigure[substitution]{\label{fig:edit_r}\scalebox{0.7}{\input{tree_edit_r.pstex_t}}} \\ 
  \subfigure[deletion]{\label{fig:edit_d}\scalebox{0.7}{\input{tree_edit_d.pstex_t}}} \\
  \subfigure[insertion]{\label{fig:edit_i}\scalebox{0.7}{\input{tree_edit_i.pstex_t}}}
  \end{center}
  \caption{Basic tree edit operations.}
  \label{fig:tree_edit}
\end{figure}

The set of substitution steps can be represented as a mapping relation satisfying the following conditions:

\begin{enumerate}
\item One-to-one mapping: A node in one tree can be mapped
to at most one node in another tree.
\item Sibling order is preserved: For any two substitution steps
$(t_1[i] \rightarrow t_2[j])$ and $(t_1[i'] \rightarrow t_2[j'])$ in the edit script, $t_1[i]$ is to the left of $t_1[i']$ if and only
if $t_2[j]$ is to the left of $t_2[j']$ (see Figure~\ref{fig:mapping_s}).
\item Ancestor order is preserved: For any two substitution steps
$(t_1[i] \rightarrow t_2[j])$ and $(t_1[i'] \rightarrow t_2[j'])$ in the edit script, $t_1[i]$ is an ancestor of $t_1[i']$ if and only
if $t_2[j]$ is an ancestor of $t_2[j']$ (see Figure~\ref{fig:mapping_a}).
\end{enumerate}

\begin{figure}[htbp]
  \begin{center}
\subfigure[sibling order preserved]{\label{fig:mapping_s}\includegraphics[scale=0.6]{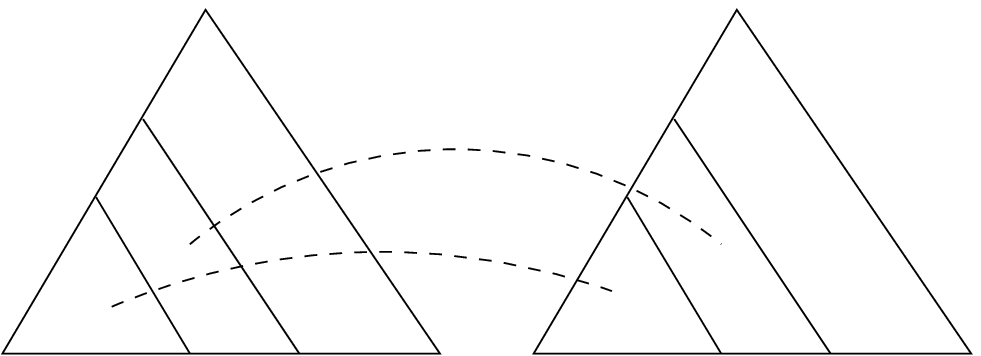}}
\qquad\qquad     
\subfigure[ancestor order preserved]{\label{fig:mapping_a}\includegraphics[scale=0.6]{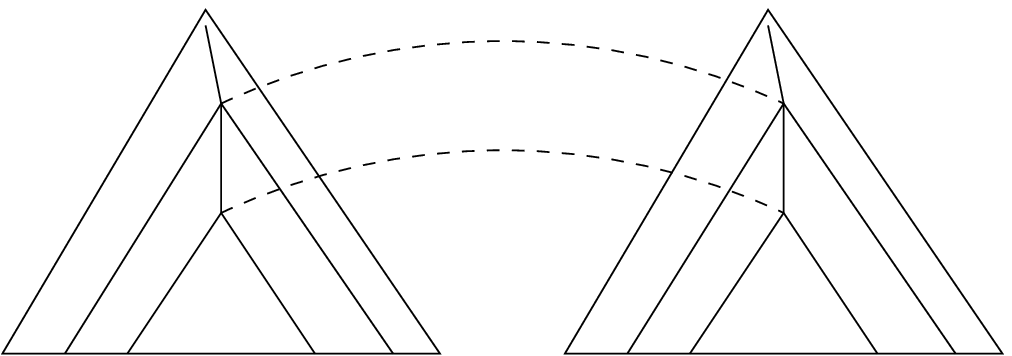}}
  \end{center}
  \caption{Tree editing conditions that preserve sibling orders
  and ancestor orders.}
  \label{fig:mapping}
\end{figure}

As a consequence of these conditions, the substitution steps
are consistent with the structural hierarchy in the original trees.

For the class of algorithms that we consider,
the solution for tree edit distance is based on the recursive formula
for forest edit distance in Equation~\ref{eqn:forest-dist}. 

\begin{equation} \label{eqn:forest-dist}
d(F, G) = 
    \min\left\{
        \begin{array}{l}
          d(F - r(T), G) + \delta(r(T), \varnothing), \\
          d(F, G - r(T')) + \delta(\varnothing, r(T')), \\
          d(F - T, G - T') + d(T, T') 
        \end{array}
        \right\} \enspace .
\end{equation}

A forest as a sequence of subtrees bears resemblance to a string if
each subtree is viewed as a unit of element. 
A string can be represented as a sequence, or an ordered set, of labeled nodes.
A forest reduces to a string when each subtree contains a single node.
In this view, the problem of forest distance may be approached
in ways analogous to the string distance problem, and the
solution would be a generalization of the string solution. 
The meaning of such a solution is based on the principle,
analogous as in the string case, that 
if we know the solutions of some subproblems each of which being a modification from the original problem by one of the three
aforementioned basic operations, then the solution of the original problem
can be constructed from the solutions of these subproblems by means of
a finite number of simple arithmetics. 
The same principle holds
recursively for all the subproblems. 
The tree-to-tree distance $d(T, T')$ in Equation~\ref{eqn:forest-dist}
is computed as in Equation~\ref{eqn:tree-dist}.
Meanwhile, when both forests are composed of one tree (i.e., $(F, G)=(T, T')$), Equation~\ref{eqn:forest-dist} reduces to 
Equation~\ref{eqn:tree-dist} which in turn makes use of
Equation~\ref{eqn:forest-dist} for computing the
associated subforest distances.

\begin{equation} \label{eqn:tree-dist}
d(T, T') = 
    \min\left\{
        \begin{array}{l}
          d(T - r(T), T') + \delta(r(T), \varnothing), \\
          d(T, T' - r(T')) + \delta(\varnothing, r(T')), \\
          d(T - r(T), T' - r(T')) + \delta(r(T), r(T'))
        \end{array}
        \right\} \enspace .
\end{equation}

The recursion in Equation~\ref{eqn:forest-dist} takes on two possible directions (see Figure~\ref{fig:recurse}):

\begin{itemize}
\item leftmost recursion where both $r(T)$ and $r(T')$ are leftmost roots,
\item rightmost recursion where both $r(T)$ and $r(T')$ are rightmost roots.
\end{itemize}

\begin{figure}[htbp]
  \begin{center}
\renewcommand{\thesubfigure}{(a-1)}
\subfigure[leftmost deletion]{\label{fig:recurse_d_l}\includegraphics[scale=0.45]{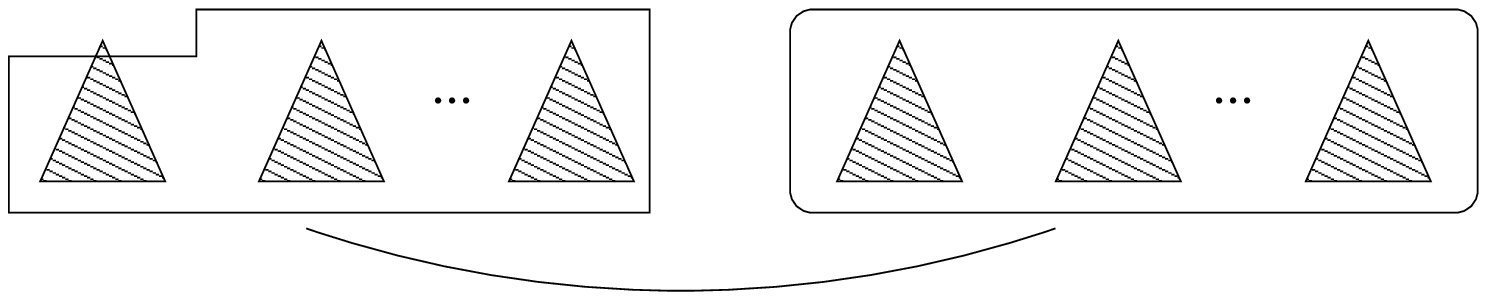}}
\qquad\qquad     
\renewcommand{\thesubfigure}{(a-2)}
\subfigure[rightmost deletion]{\label{fig:recurse_d_r}\includegraphics[scale=0.45]{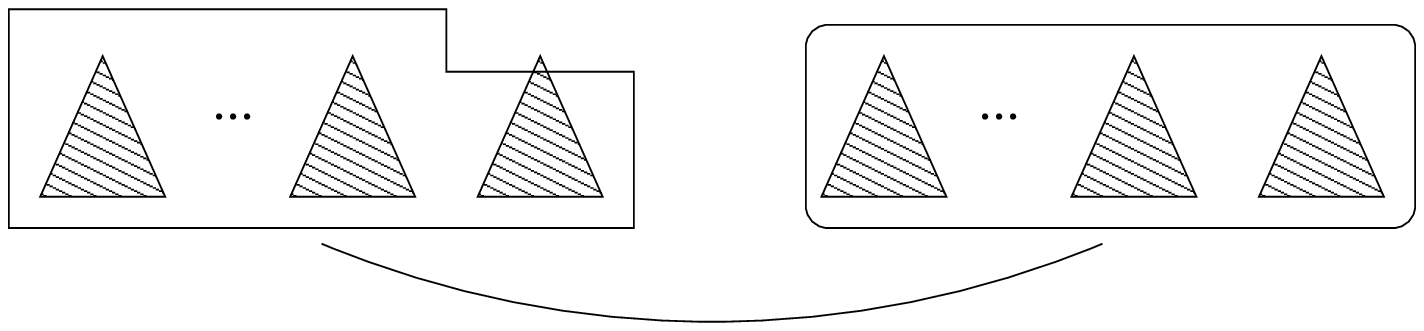}} \\
\renewcommand{\thesubfigure}{(b-1)}
\subfigure[leftmost insertion]{\label{fig:recurse_i_l}\includegraphics[scale=0.45]{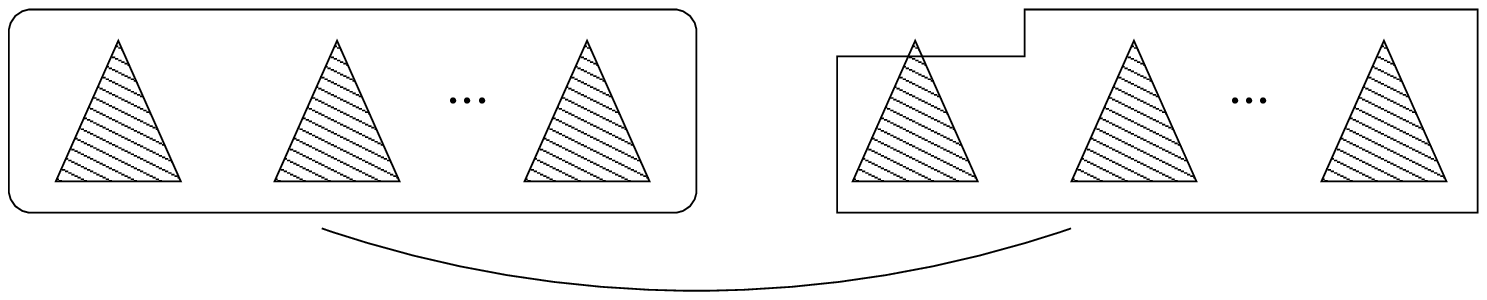}} 
\qquad\qquad     
\renewcommand{\thesubfigure}{(b-2)}
\subfigure[rightmost insertion]{\label{fig:recurse_i_r}\includegraphics[scale=0.45]{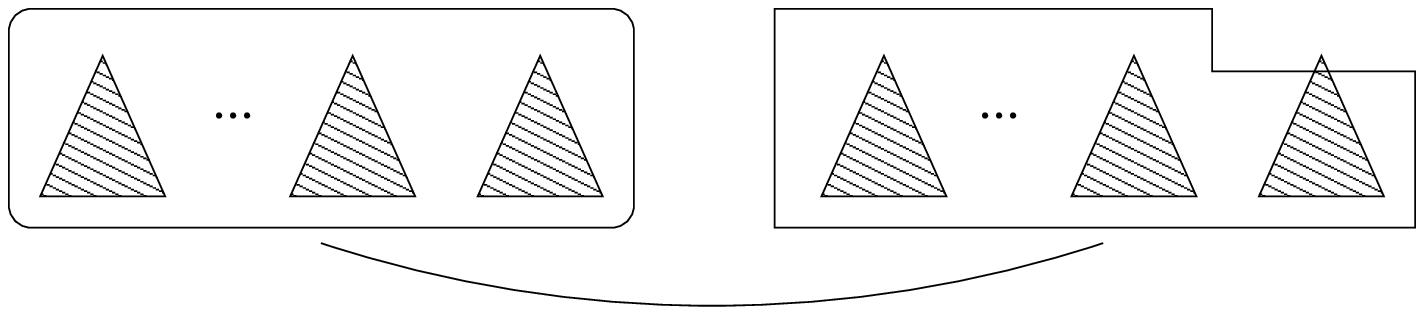}} \\
\renewcommand{\thesubfigure}{(c-1)}
\subfigure[leftmost substitution]{\label{fig:recurse_m_l}\includegraphics[scale=0.45]{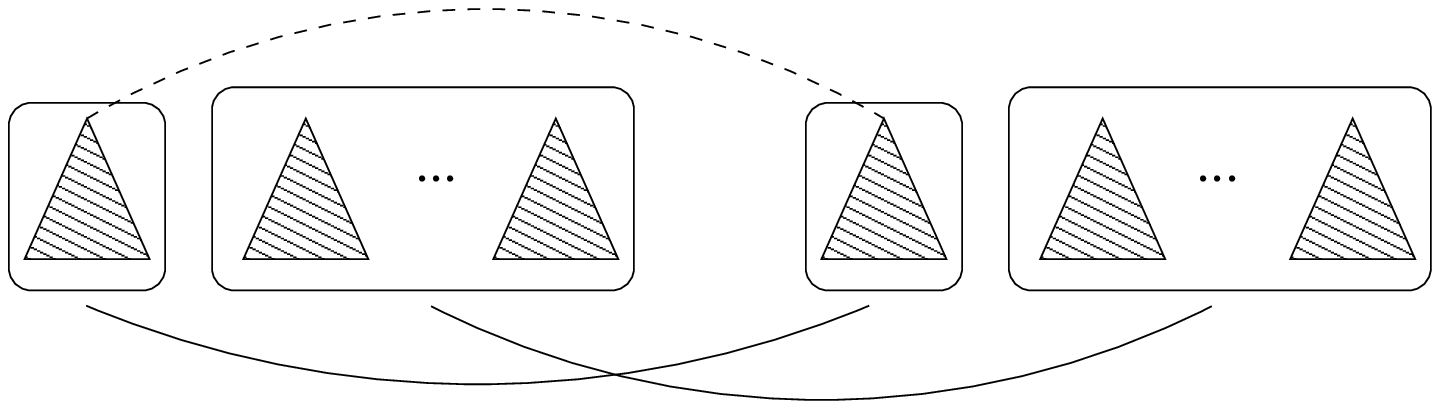}}
\qquad\qquad   
\renewcommand{\thesubfigure}{(c-2)}  
\subfigure[rightmost substitution]{\label{fig:recurse_m_r}\includegraphics[scale=0.45]{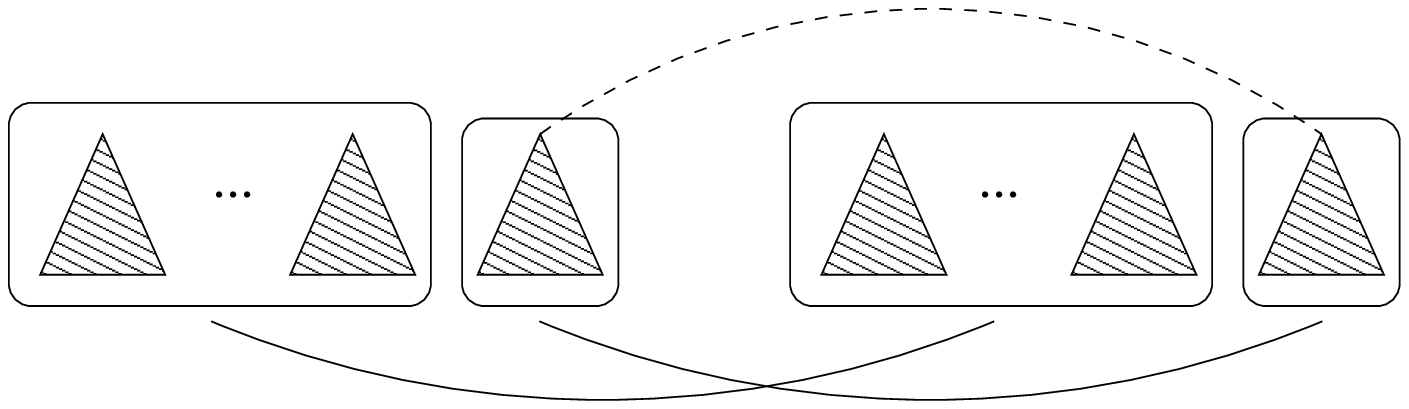}}
  \end{center}
  \caption{Recursion directions.}
  \label{fig:recurse}
\end{figure}

There are a few things to note regarding the above formulae. 
First, we need all the subtree-subtree distances in order to
construct the solution. That is, given
$Q_a=\{T_a[i]\mid t_a[i] \in T_a\}$ with $a \in \{1, 2\}$,
we need to compute all the distances for $Q_1 \times Q_2$.
Since we are solving an optimization problem,
the result is optimal only if all possible cases
have been considered from which the optimal one is selected. 
This means that all combinations of node-to-node mappings
which satisfy the editing conditions need to be considered, which
translates into the need for computing all subtree-subtree distances.
Second, the direction of recursion has an influence on which subforests
would be relevant in the construction of the solution. These are the
subforests that would appear in the recursive calls. 

\begin{definition}[Relevant Subforests] \label{def:relevant-subforests}
The ``relevant subforests'' with respect to a tree edit distance
solution are those subforests that appear in the recursive calls
in Equation~\ref{eqn:forest-dist}.
\end{definition}

Figure~\ref{fig:recurse_rm_d} and Figure\ref{fig:recurse_rm_a}
show examples of relevant subforests generated from rightmost recursion.
Figure~\ref{fig:recurse_heavy_d} and Figure~\ref{fig:recurse_heavy_a}
show examples of relevant subforests generated from a recursion
that operates on the left side and right side intermittently
with respect to a predefined path.
More details will be given in the next section regarding this type
of recursion. 

\begin{figure}[htbp]
\begin{center}
\subfigure[relevant subforests resulted from successive deletions]{\label{fig:recurse_rm_d}\includegraphics[scale=0.5]{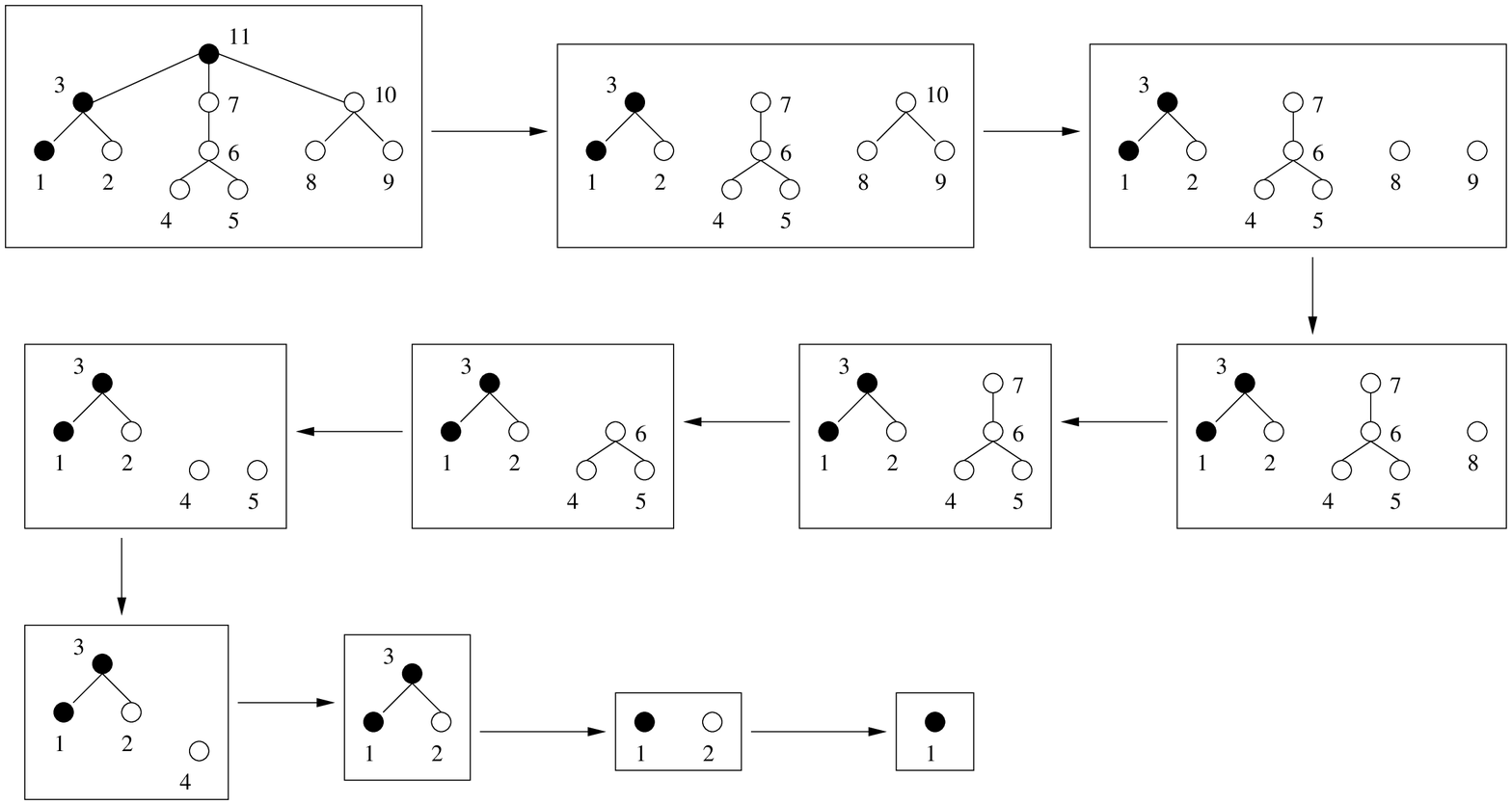}}
\\ 
\subfigure[relevant subforests resulted from deletions and substitutions]{\label{fig:recurse_rm_a}\includegraphics[scale=0.5]{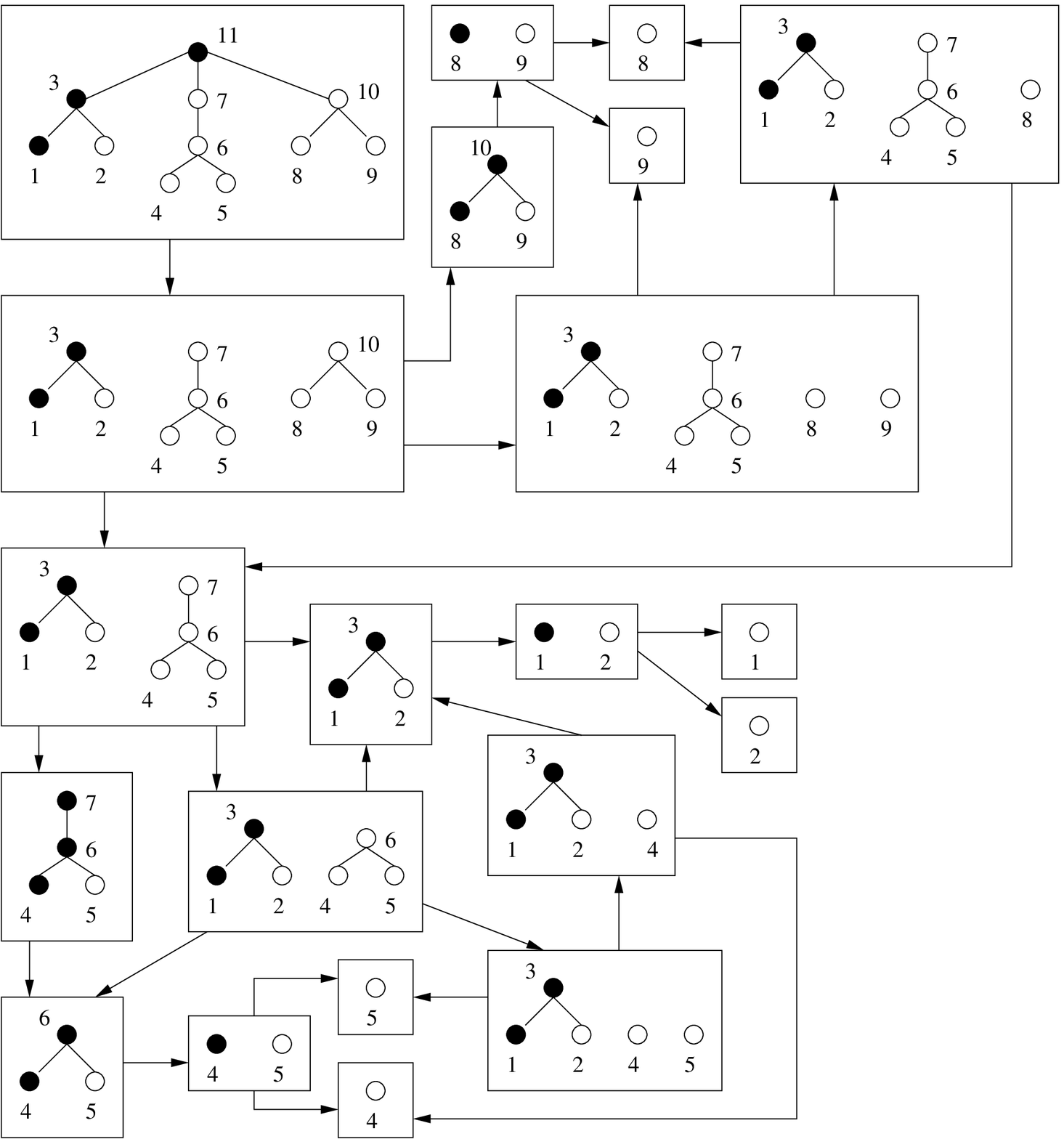}}
\end{center}
\caption{An example showing the relevant subforests from the rightmost recursion
with respect to the leftmost path.}
\label{fig:recurse_rm}
\end{figure}

\begin{figure}[htbp]
\begin{center}
\subfigure[relevant subforests resulted from successive deletions]{\label{fig:recurse_heavy_d}\includegraphics[scale=0.5]{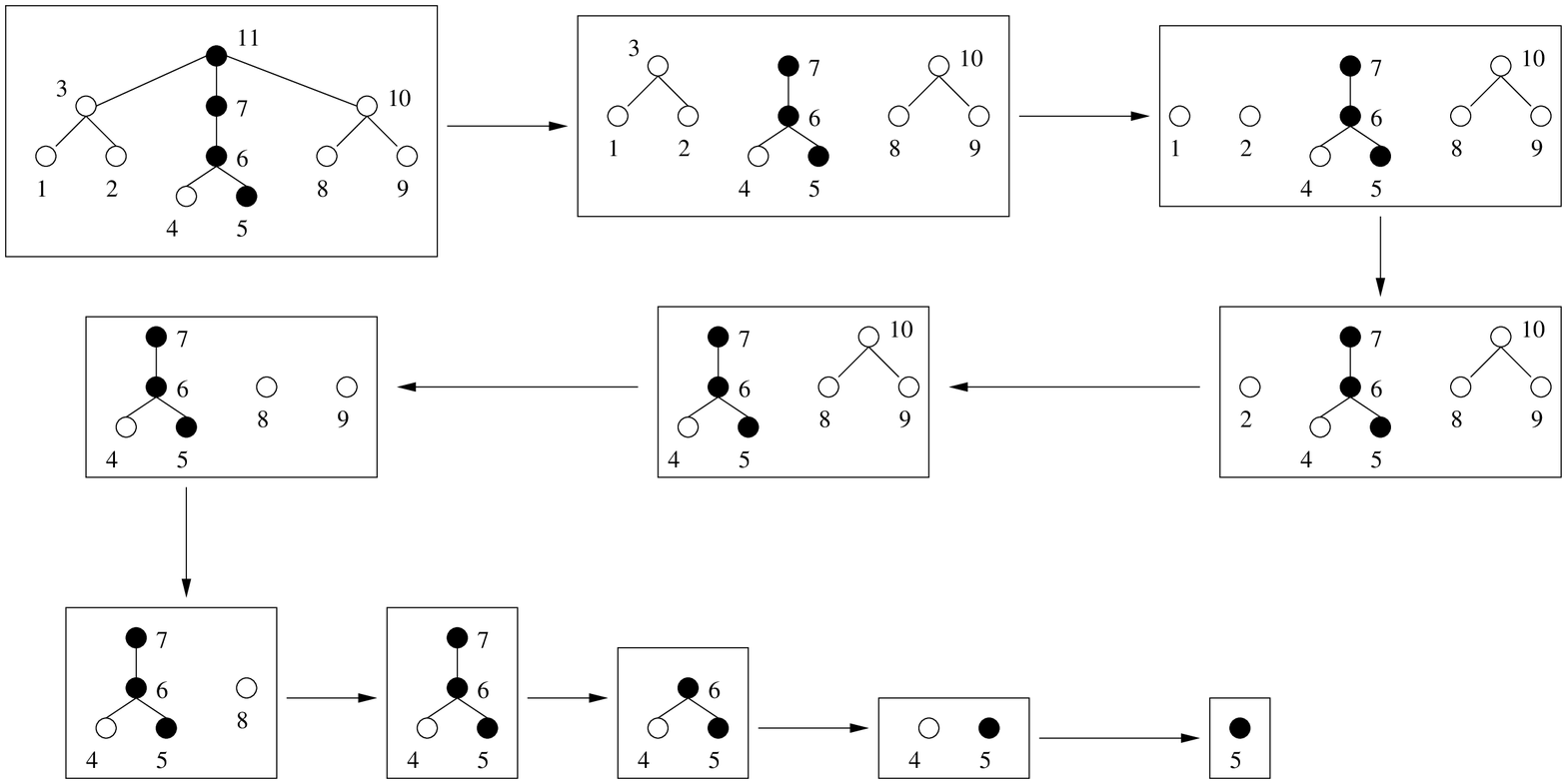}}
\\
\subfigure[relevant subforests resulted from deletions and substitutions]{\label{fig:recurse_heavy_a}\includegraphics[scale=0.5]{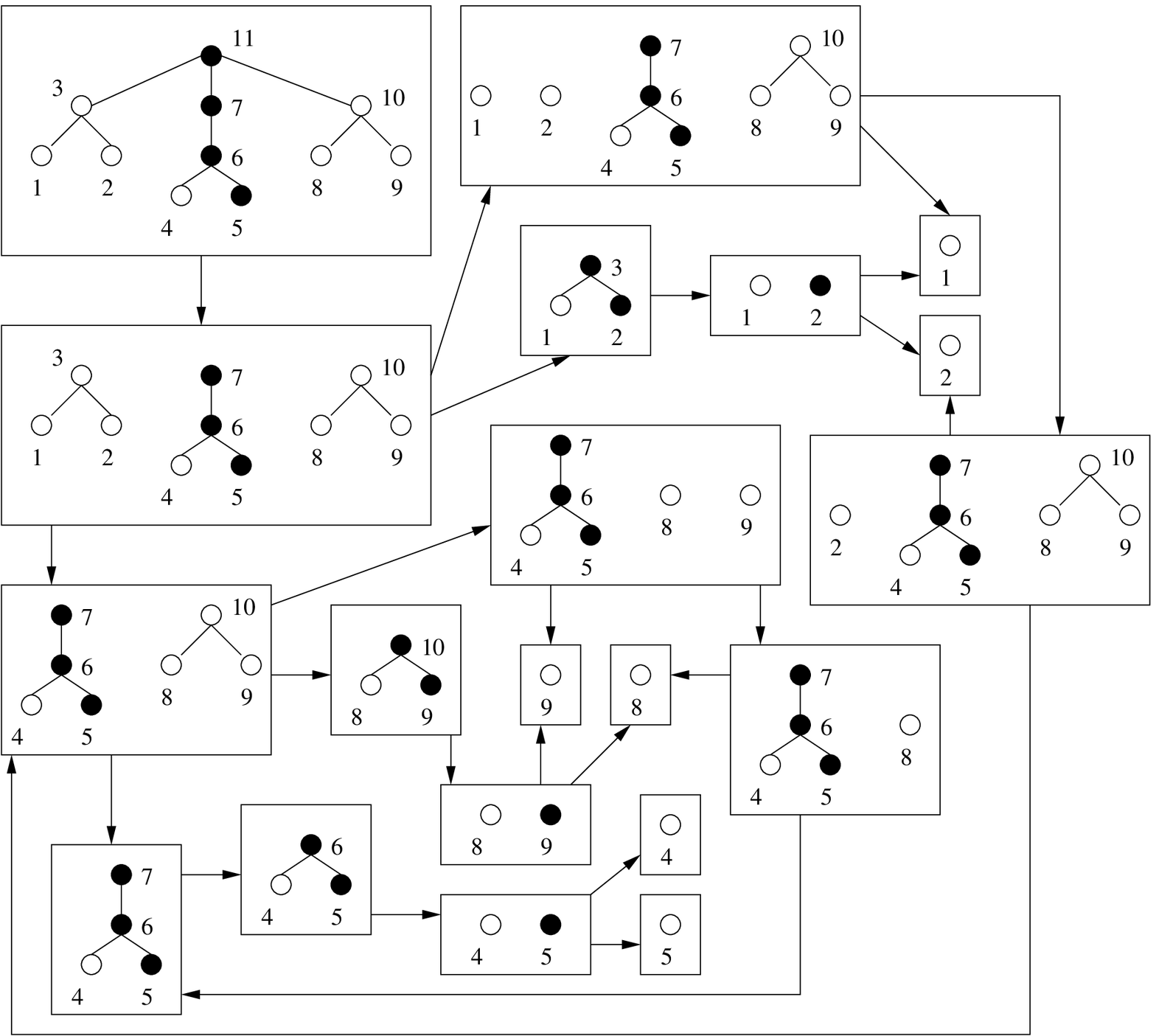}}
\end{center}
\caption{An example showing the relevant subforests from a recursion that operates on the
left side and right side intermittently with respect to a predefined
path.
}
\label{fig:recurse_heavy}
\end{figure}

In constructing an algorithmic solution based on
Equation~\ref{eqn:forest-dist},
there are two complementary aspects to consider:

\begin{itemize}
\item Top-down aspect: This concerns the direction
from the left-hand side to the right-hand side of the
recursion. 
\item Bottom-up aspect: This concerns the direction from the right-hand side
to the left-hand side of the recursion. 
\end{itemize}

In the context of complexity analysis, we express the number of
elementary operations in terms of the number of recursive calls along relevant recursion paths or the number of steps in a bottom-up enumeration sequence,
interchangeably. 
This is due to the fact that to every sequence of
top-down recursive calls based on Equation~\ref{eqn:forest-dist} 
corresponds a sequence of bottom-up enumeration steps.

Our plan in understanding the complexity issues is to start with the bottom-up aspect and eventually
relate it to the top-down aspect. 
As such, we initially consider procedures based on the bottom-up style. 
As a starting point, consider the following approaches:

\begin{itemize}
\item the recursion direction is fixed to be either leftmost
or rightmost, 
\item the recursion direction may vary between leftmost and rightmost.
\end{itemize}

In either approach, we need an enumeration scheme which specifies
the order of distance computations for the subproblems. 

\paragraph{Fixed-Direction Recursion:}

For recursion of fixed direction, a naive scheme
is to arrange the subtree-subtree distance computations,
as well as the relevant forest-forest distance computations,
in one of two alternative ways
as follows:

\begin{itemize}
\item LR-postorder: The subtrees as well as the subforests contained in
each subtree are enumerated in left-to-right postorder.
\item RL-postorder: The subtrees as well as the subforests contained in
each subtree are enumerated in right-to-left postorder.
\end{itemize}

The procedures for sorting the enumeration order for subforests are listed in Algorithms~\ref{alg:LR} and \ref{alg:RL}.

\begin{algorithm2e}[htbp]
\caption{Construct an enumeration scheme for the subforests of a tree $T$ based on LR-postorder.\label{alg:LR}}
\SetKwInOut{Input}{input}
\SetKwInOut{Output}{output}

\Input{$T$, with $|T|=n$}
\Output{an enumeration sequence $L$ of subforests of $T$ based on the LR-postorder
}
\BlankLine
label the nodes of $T$ in LR-postorder \;
\For{$i \leftarrow 1$ \KwTo $n$}
{construct $S_i$ to be a sequence of subforests of $T[i]$ with the
rightmost root enumerated in
LR-postorder \;
}
$L = S_1$ \;
\For{$i \leftarrow 2$ \KwTo $n$}
{$L = L \circ S_i$ \;
}
output $L$ \;
\end{algorithm2e}

\begin{algorithm2e}[htbp]
\caption{Construct an enumeration scheme for the subforests of a tree $T$ based on RL-postorder.\label{alg:RL}}
\SetKwInOut{Input}{input}
\SetKwInOut{Output}{output}

\Input{$T$, with $|T|=n$}
\Output{an enumeration sequence $L$ of subforests of $T$ based on the RL-postorder
}
\BlankLine
label the nodes of $T$ in RL-postorder \;
\For{$i \leftarrow 1$ \KwTo $n$}
{construct $S_i$ to be a sequence of subforests of $T[i]$ with the
leftmost root enumerated in
RL-postorder \;
}
$L = S_1$ \;
\For{$i \leftarrow 2$ \KwTo $n$}
{$L = L \circ S_i$ \;
}
output $L$ \;
\end{algorithm2e}

A simple example of computing $d(T_1, T_2)$ is given in Figure~\ref{fig:t1t2-edit},
where the enumeration of nodes follows the LR-postorder
as described in Algorithm~\ref{alg:LR}.
A position in a table corresponding to a pair of nodes 
$(t_1[i], t_2[j])$ represents the distance between two
relevant subforests
with $t_1[i]$ and $t_2[j]$ being the rightmost roots.
Figures~\ref{fig:t1t2-edit_1} and \ref{fig:t1t2-edit_2} show
the computations for $d(T_1, T_2[j])$ with $t_2[j] \in \{d, e, f\}$
and $d(T_1[i], T_2)$ with $t_1[i] \in \{a, b\}$, respectively.
The computation for $d(T_1, T_2)$ is shown in Figure~\ref{fig:t1t2-edit_3} which makes use of the distances 
computed in Figures~\ref{fig:t1t2-edit_1} and \ref{fig:t1t2-edit_2}.
For example, consider the position corresponding to $(c, f)$
in Figure~\ref{fig:t1t2-edit_3}.
This corresponds to $d(T_1, T_2 - g)$ where $T_2 - g$ is the forest
obtained from $T_2$ by removing the root $g$.
By Equation~\ref{eqn:forest-dist}, we have:
\[ 
d(T_1, T_2 - g) = 
    \min\left\{
        \begin{array}{l}
          d(T_1 - c, T_2 - g) + \delta(c, \varnothing), \\
          d(T_1, T_2 - g - f) + \delta(\varnothing, f), \\
          d(\varnothing, T_2 - g - f) + d(T_1, f) 
        \end{array}
        \right\} \enspace .
\]
Denote by $D_i(x, y)$ with $i \in \{\subref{fig:t1t2-edit_1},
\subref{fig:t1t2-edit_2}, \subref{fig:t1t2-edit_3}\}$ the values in the tables
in Figures~\ref{fig:t1t2-edit_1}, \ref{fig:t1t2-edit_2}, and
\ref{fig:t1t2-edit_3}, respectively, at the position corresponding to
$x$ and $y$. 
Therefore, $d(T_1, T_2 - g) = 
\min\{D_{\subref{fig:t1t2-edit_3}}(b, f) + \delta(c, \varnothing),
D_{\subref{fig:t1t2-edit_3}}(c, e) + \delta(\varnothing, f),
D_{\subref{fig:t1t2-edit_3}}(\varnothing, e) + 
D_{\subref{fig:t1t2-edit_1}}(c, f)\} =
\min\{4+2, 4+2, 4+5\} = \min\{6, 6, 9\} = 6.$
Note that $d(T_1, f)$ in the last term is computed in the table of 
Figure~\ref{fig:t1t2-edit_1} at $D_{\subref{fig:t1t2-edit_1}}(c, f)$.
As another example, $D_{\subref{fig:t1t2-edit_3}}(c, d)$ and
$D_{\subref{fig:t1t2-edit_3}}(a, g)$ are computed at
$D_{\subref{fig:t1t2-edit_1}}(c, d)$ and 
$D_{\subref{fig:t1t2-edit_2}}(a, g)$, respectively.

\begin{figure}[htbp]
	\begin{center}
\renewcommand{\thesubfigure}{}\subfigure[$T_1$]{\includegraphics[scale=0.8]{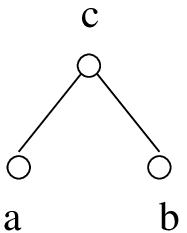}}
\qquad\qquad     
\renewcommand{\thesubfigure}{} \subfigure[$T_2$]{\includegraphics[scale=0.8]{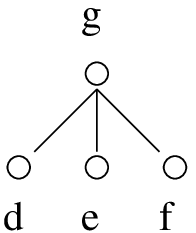}} \\ \vspace{5mm} 
\renewcommand{\thesubfigure}{(a)}
\subfigure[]{\label{fig:t1t2-edit_1}\scalebox{1}
{
	\begin{tabular}{|c|*{2}{c}c|}
			\hline
			 & $\varnothing$ & & $\{d, e, f\}$   \\
			\hline
			$\varnothing$ & 0 & & 2   \\ 
			              & $\uparrow$ & $\nwarrow$ &  \\
			a & 2 & & 1  \\
			              & $\uparrow$ & $\nwarrow$ & $\uparrow$ \\
			b & 4 & & 3  \\
			              &   & $\nwarrow$ & $\uparrow$ \\
			c & 6 & & 5  \\
			\hline
		\end{tabular}
}} 
\qquad\qquad     
\renewcommand{\thesubfigure}{(b)}
\subfigure[]{\label{fig:t1t2-edit_2}\scalebox{1}
{
	\begin{tabular}{|c|*{8}{c}c|}
			\hline
			 & $\varnothing$ & & d & & e & & f & & g  \\
			\hline
			$\varnothing$ & 0 & $\leftarrow$ & 2 & $\leftarrow$ & 4 & $\leftarrow$ & 6 & & 8  \\ 
			              &   & $\nwarrow$ &   & $\nwarrow$ &   & $\nwarrow$ &   & $\nwarrow$ & \\
			$\{a, b\}$ & 2 & & 1 & $\leftarrow$ & 3 & $\leftarrow$ & 5 & $\leftarrow$ & 7 \\
			\hline
		\end{tabular}
}} \\
\renewcommand{\thesubfigure}{(c)}
\subfigure[]{\label{fig:t1t2-edit_3}\scalebox{1}
{
	\begin{tabular}{|c|*{8}{c}c|}
			\hline
			 & $\varnothing$ & & d & & e & & f & & g  \\
			\hline
			$\varnothing$ & 0 & $\leftarrow$ & 2 & & 4 & & 6 & & 8  \\ 
			              &   & $\nwarrow$ &   & $\nwarrow$ &   & &   & & \\
			a & 2 & & 1 & $\leftarrow$ & 3 & & 5 & & 7 \\
			              &   & &   & $\nwarrow$ &   & $\nwarrow$ &   & & \\
			b & 4 & & 3 & & 2 & $\leftarrow$ & 4 & & 6 \\
			              &   & &   & &   & &   & $\nwarrow$ & \\
			c & 6 & & 5 & & 4 & & 6 & & 5 \\
			\hline
		\end{tabular}
}} 
  \end{center}
	\caption{Tables for the computation of $d(T_1, T_2)$. 
	The basic edit costs are defined as
	follows: $\delta(x, y)=1$ if $x \neq y$, and $0$ if $x=y$.
	$\delta(x, \varnothing)=\delta(\varnothing, x)=2$. The optimal edit scripts
	can be traced with the arrow sequences.}
	\label{fig:t1t2-edit}
\end{figure}

\begin{lemma} \label{lem:time-lr-enum}
The enumeration scheme based on the LR or RL-postorder takes
$O(|T|^2)$ steps.
\end{lemma}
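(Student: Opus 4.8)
The plan is to measure the cost of the scheme by the length of the output sequence $L$, since each subforest placed into $L$ is one enumeration step, and then to bound that length by $O(|T|^2)$. Because $L = S_1 \circ S_2 \circ \cdots \circ S_n$, the length is $\sum_{i=1}^{n} |S_i|$, so the crux is to control $|S_i|$, the number of subforests attached to the subtree $T[i]$.

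First I would show $|S_i| = O(|T[i]|)$. After labelling the nodes of $T$ in LR-postorder, let $\ell(i)$ be the index of the leftmost leaf descendant of $t[i]$; then $T[i]$ occupies exactly the contiguous block of indices $\{\ell(i), \ldots, i\}$, so $|T[i]| = i - \ell(i) + 1$. The subforests enumerated in $S_i$ are the relevant subforests of $T[i]$ (Definition~\ref{def:relevant-subforests}) produced by the fixed rightmost recursion, i.e.\ by repeatedly removing either the rightmost root or the rightmost subtree. A brief structural check shows these operations keep us inside the family of LR-postorder prefixes: applied to a prefix $\{\ell(i), \ldots, k\}$, removing the rightmost root $t[k]$ yields $\{\ell(i), \ldots, k-1\}$, while removing the rightmost subtree $T[k]$ yields $\{\ell(i), \ldots, \ell(k)-1\}$. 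Hence every subforest in $S_i$ has the form $\{\ell(i), \ldots, k\}$ for a single endpoint $k$, and there are at most $|T[i]|$ nonempty such prefixes (one per choice of rightmost root). Therefore $|S_i| = O(|T[i]|)$.

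Summing, the length of $L$ is $\sum_{i=1}^{n} |S_i| = \sum_{i=1}^{n} O(|T[i]|)$, and it remains to bound $\sum_{i=1}^{n} |T[i]|$. The simplest estimate is $|T[i]| \le n$ for each of the $n$ nodes, giving $\sum_{i} |T[i]| \le n^2 = |T|^2$; equivalently, by exchanging the order of counting, $\sum_i |T[i]|$ equals the number of pairs of nodes $(t[i], v)$ with $v \in T[i]$, which is again at most $n^2$. Since constructing each $S_i$ and appending it with constant-time concatenation costs time proportional to $|S_i|$, both the output length and the running time are $O(|T|^2)$. The RL-postorder scheme of Algorithm~\ref{alg:RL} is the exact left--right mirror of the above and needs no separate argument. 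The one nontrivial step is the first: establishing that a single subtree contributes only linearly many relevant subforests. Everything rests on the closure observation that, under a fixed recursion direction, the reachable subforests are precisely the postorder prefixes and are thus indexed by one endpoint; once this linearity is secured the quadratic bound is immediate, and the path example shows $\sum_i |T[i]| = \Theta(n^2)$, so the estimate cannot be improved for this naive scheme.
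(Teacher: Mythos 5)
Your proof is correct and follows essentially the same route as the paper: both arguments reduce the step count to the double sum over pairs $(T[i], v)$ with $v \in T[i]$, bounded by $|T|^2$ --- you sum by subtree ($|S_i| \leq |T[i]|$) while the paper sums by node ($s_i \leq depth(t_i)$), which is just an exchange of summation order. Your explicit justification that each relevant subforest of $T[i]$ is an LR-postorder prefix determined by its rightmost root is a more detailed rendering of the paper's one-line observation that each node is the rightmost root of exactly one relevant subforest per containing subtree.
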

\begin{proof}
We consider only the LR case as RL is symmetrical.
Each node $t_i$ within a subtree $T_k$ is contained in exactly one relevant subforest in $T_k$
having $t_i$ as the rightmost root. Denote by $s_i$ the number of
subtrees in which a node $t_i$ can be. 
Summing over all nodes, we have the total number of enumeration steps as $\sum_{i=1}^{|T|}s_i \leq \sum_{i=1}^{|T|}depth(t_i)
\leq \sum_{i=1}^{|T|}depth(T)
\leq \sum_{i=1}^{|T|}|T| = O(|T|^2)$. 
\end{proof}

\paragraph{Variable-Direction Recursion:}

For recursion of variable direction, we enumerate the subforests in one of two 
alternative orders as follows:

\begin{itemize}
\item Prefix-suffix postorder: For each node $t[i]$ enumerated in
LR-postorder, we enumerate the relevant subforests in increasing size as those
with distinct leftmost roots which contain $t[i]$ as the rightmost
root. 
\item Suffix-prefix postorder: For each node $t[i]$ enumerated in RL-postorder, we enumerate the relevant subforests in increasing size as those
with distinct rightmost roots which contain $t[i]$ as the leftmost
root. 
\end{itemize}

The order of enumeration would be such that for any subforest $F$,
all the subforests contained in $F$ have been enumerated
before $F$ is enumerated.
If we enumerate the subforests with the prefix-suffix postorder, this is done as follows. 
Consider in general a forest in which $t_i$ and $t_j$ are the leftmost and rightmost roots, respectively.
The rightmost root is enumerated in a left-to-right postorder
starting at the leftmost leaf.
For each $t_j$ thus enumerated, consider the largest forest
with $t_j$ being the rightmost root. Now, to obtain the order
for the subforests contained in this forest with $t_j$ being the
rightmost root, let 
$F_1, F_2, \cdots, F_k$ be the sequence of subforests resulted from
successively deleting the leftmost root from the forest until only
the rightmost subtree rooted on $t_j$ remains, i.e., $F_k = T[t_j]$. 
The order we want is the reverse sequence $F_k, F_{k-1}, \cdots, F_1$.
In this way, we obtain a sequence of subforests for each $t_j$.
Concatenate all the sequences in the increasing order of $t_j$, we
have the final sequence of all the subforests of $T$ arranged in a proper
order. 
The alternative way of enumerating the subforests, namely the 
suffix-prefix postorder, is handled
symmetrically. 
The procedures are listed in Algorithms~\ref{alg:prefix-suffix} and
\ref{alg:suffix-prefix}. 

\begin{algorithm2e}[htbp]
\caption{Construct an enumeration scheme for the subforests of a tree $T$ based on prefix-suffix postorder.\label{alg:prefix-suffix}}
\SetKwInOut{Input}{input}
\SetKwInOut{Output}{output}

\Input{$T$, with $|T|=n$}
\Output{an enumeration sequence $L$ of subforests of $T$ based on the prefix-suffix postorder
}
\BlankLine
construct $P$ to be a sequence of subforests of $T$ resulted from successive
deletion on the rightmost root \tcc*[r]{$P[1]=T$}
construct $P'=(F_1, F_2, \cdots, F_n)$ to be the reverse sequence of $P$ \tcc*[r]{$F_n=T$}
\For{$i \leftarrow 1$ \KwTo $n$}
{construct $S_i$ to be a sequence of subforests of $F_i \in P'$, all sharing the same rightmost root, resulted from successive
deletion on the leftmost root 
\tcc*[r]{$S_i[1]=F_i$, $S_i[k]=S_i[k-1]-lm\_root(S_i[k-1])$,
$rm\_root(S_i[k])=rm\_root(S_i[k-1])$, $\forall k > 1$}
construct $S'_i$ to be the reverse sequence of $S_i$ 
\tcc*[r]{$S'_i[|S'_i|]=F_i$}
}
$L = S'_1$ \;
\For(\tcc*[f]{concatenate all sequences}){$i \leftarrow 2$ \KwTo $n$}
{$L = L \circ S'_i$ \;
}
output $L$ \;
\end{algorithm2e}

\begin{algorithm2e}[htbp]
\caption{Construct an enumeration scheme for the subforests of a tree $T$ based on suffix-prefix postorder.\label{alg:suffix-prefix}}
\SetKwInOut{Input}{input}
\SetKwInOut{Output}{output}

\Input{$T$, with $|T|=n$}
\Output{an enumeration sequence $L$ of subforests of $T$ based on the suffix-prefix postorder
}
\BlankLine
construct $S$ to be a sequence of subforests of $T$ resulted from successive
deletion on the leftmost root \tcc*[r]{$S[1]=T$}
construct $S'=(F_1, F_2, \cdots, F_n)$ to be the reverse sequence of $S$ \tcc*[r]{$F_n=T$}
\For{$i \leftarrow 1$ \KwTo $n$}
{construct $P_i$ to be a sequence of subforests of $F_i \in S'$, all sharing the same leftmost root, resulted from successive
deletion on the rightmost root 
\tcc*[r]{$P_i[1]=F_i$, $P_i[k]=P_i[k-1]-rm\_root(P_i[k-1])$,
$lm\_root(P_i[k])=lm\_root(P_i[k-1])$, $\forall k > 1$}
construct $P'_i$ to be the reverse sequence of $P_i$ 
\tcc*[r]{$P'_i[|P'_i|]=F_i$}
}
$L = P'_1$ \;
\For(\tcc*[f]{concatenate all sequences}){$i \leftarrow 2$ \KwTo $n$}
{$L = L \circ P'_i$ \;
}
output $L$ \;
\end{algorithm2e}

Examples of prefix-suffix and suffix-prefix postorder enumerations  are given in Figure~\ref{fig:prefix-suffix} and
Figure~\ref{fig:suffix-prefix}, respectively.
In Figure~\ref{fig:prefix-suffix}, subforests having the same rightmost
root are in contiguous boxes, whereas in Figure~\ref{fig:suffix-prefix},
subforests having the same leftmost
root are in contiguous boxes.

\begin{figure}[htbp]
\begin{center}
\includegraphics[scale=0.6]{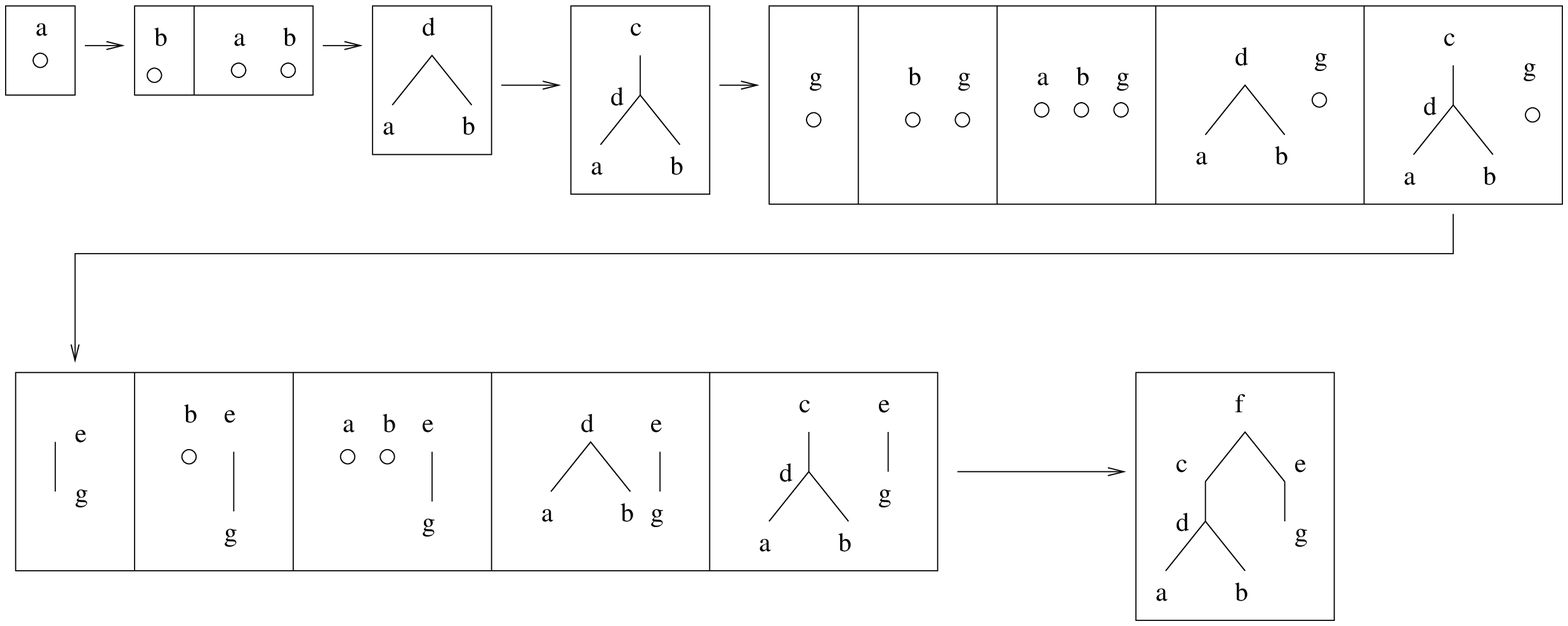}
\end{center}
\caption{An example of enumerating subforests in prefix-suffix
postorder.}
\label{fig:prefix-suffix}
\end{figure}

\begin{figure}[htbp]
\begin{center}
\includegraphics[scale=0.6]{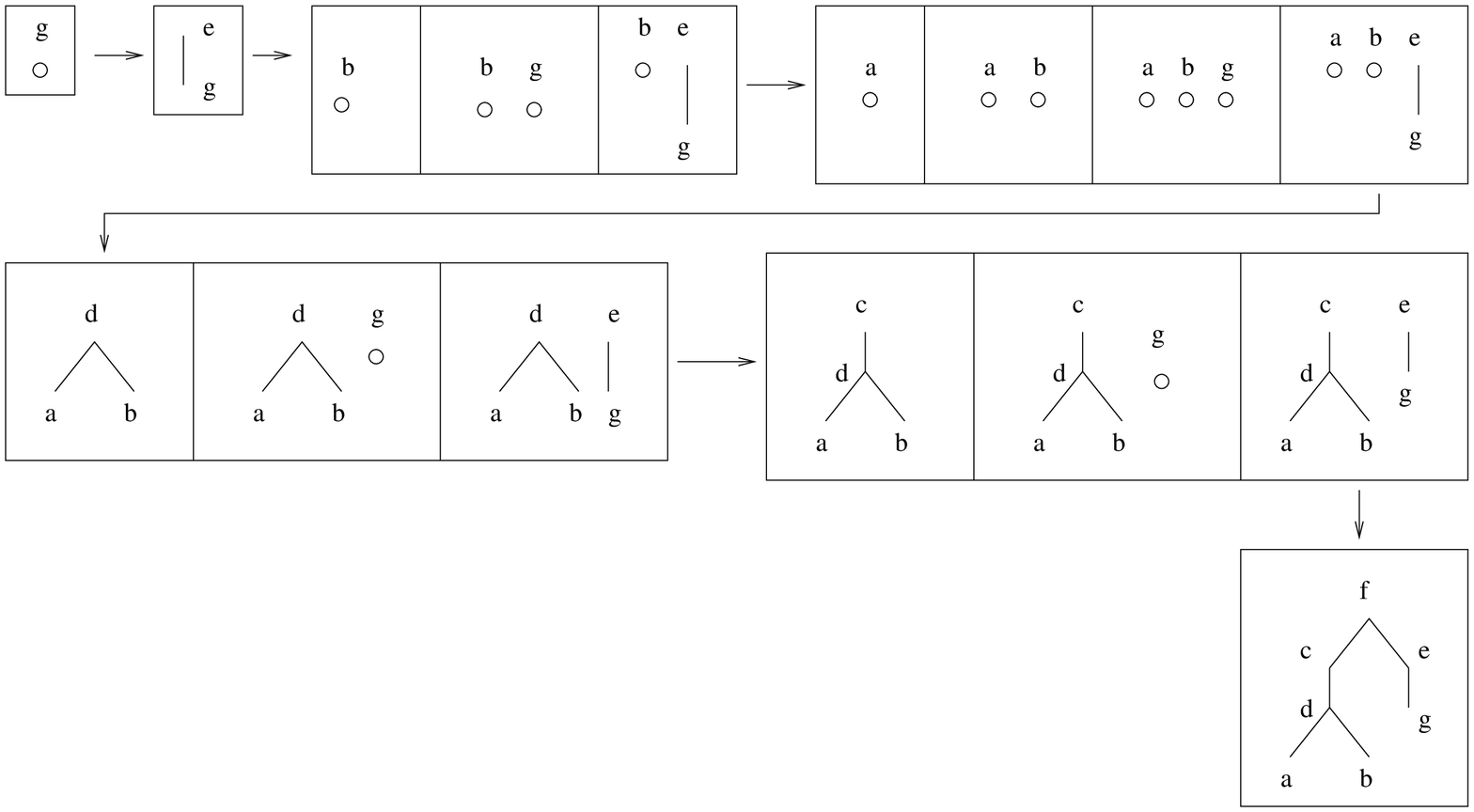}
\end{center}
\caption{An example of enumerating subforests in suffix-prefix
postorder.}
\label{fig:suffix-prefix}
\end{figure}

\begin{lemma} \label{lem:time-ps-enum}
The enumeration scheme based on prefix-suffix or suffix-prefix postorder
takes $O(|T|^2)$ steps.
\end{lemma}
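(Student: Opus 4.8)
The plan is to mimic the counting argument used for Lemma~\ref{lem:time-lr-enum}, exploiting the symmetry between the two cases. Since the suffix-prefix postorder is obtained from the prefix-suffix postorder by interchanging the roles of leftmost and rightmost (equivalently, by reflecting $T$ left-to-right), I would establish the bound only for the prefix-suffix case and invoke this symmetry to dispose of the other.

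For the prefix-suffix case, I would count directly from the structure produced by Algorithm~\ref{alg:prefix-suffix}. The total number of enumeration steps equals the length of the concatenated output $L = S'_1 \circ S'_2 \circ \cdots \circ S'_n$, that is $\sum_{i=1}^{n} |S'_i| = \sum_{i=1}^{n} |S_i|$, and the idea is to bound the two indices of this double sum separately. For the outer index, I would observe that $P' = (F_1, \dots, F_n)$ has exactly $n = |T|$ entries: starting from $P[1] = T$ and deleting the rightmost root at each step removes exactly one node, so the process produces $n$ forests, each with a distinct rightmost root (the node deleted to pass from $P[k]$ to $P[k+1]$ never reappears). For the inner index, I would observe that $S_i$ is generated from $F_i$ by successively deleting the leftmost root, again one node per step, stopping once only the rightmost subtree remains; hence $|S_i| \le |F_i| \le |T|$.

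Combining the two bounds gives $\sum_{i=1}^{n} |S_i| \le \sum_{i=1}^{n} |T| = n \cdot |T| = |T|^2$, which is $O(|T|^2)$ as claimed. A slightly tighter accounting, using that the sizes $|F_i|$ range over $1, 2, \dots, n$ so that $\sum_{i=1}^{n} |F_i| = n(n+1)/2$, yields the same order. An equivalent and perhaps more illuminating view, paralleling the proof of Lemma~\ref{lem:time-lr-enum}, is that each enumerated subforest is uniquely identified by the pair consisting of its leftmost root and its rightmost root, both nodes of $T$, so the number of distinct enumerated subforests is at most $|T|^2$.

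I do not expect a serious obstacle; the care is entirely in the bookkeeping. The one point to verify cleanly is that each deletion of a leftmost or rightmost root removes exactly one node, so that the lengths of both $P'$ and of each $S_i$ are governed by node counts rather than by branching, together with the fact that the inner sequences $S_i$ preserve their rightmost root throughout, which is precisely the invariant asserted in the comments of Algorithm~\ref{alg:prefix-suffix}. Once these are granted, the product bound $n \cdot |T|$ follows immediately.
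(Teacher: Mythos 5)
Your proposal is correct and follows essentially the same argument as the paper: fix the rightmost root, bound by $|T|$ the number of relevant subforests sharing that rightmost root (your $|S_i|$ is exactly the paper's $f_i$), and sum over the $|T|$ choices of rightmost root, handling suffix-prefix by symmetry. The extra bookkeeping you supply (one node removed per deletion, the $(\text{leftmost root}, \text{rightmost root})$ identification) is a harmless elaboration of the same count.
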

\begin{proof}
We consider only the prefix-suffix postorder as the suffix-prefix postorder
is the symmetrical case. Denote by $f_i$ the number of subforests
with distinct leftmost roots which contain $t_i$ as the rightmost
root. 
Summing over
all nodes, we have 
$\sum_{i=1}^{|T|}f_i \leq \sum_{i=1}^{|T|}|T| = O(|T|^2)$.
\end{proof}

An algorithm for computing tree edit distances where the relevant
subforests are enumerated by the above procedures is given in
Algorithm~\ref{alg:TED_n4}. The algorithm can be implemented using 
$O(|T_1| \times |T_2|)$ space if the forest distances are allowed to be
overwritten. 

\begin{algorithm2e}[htbp]
\caption{Compute tree edit distance in $O(m^2n^2)$ time.\label{alg:TED_n4}}
\SetKwInOut{Input}{input}
\SetKwInOut{Output}{output}

\Input{$(T_1, T_2)$, with $|T_1|=m$ and $|T_2|=n$}
\Output{$d(T_1[i], T_2[j])$ for 
$1 \leq i \leq m$ and $1 \leq j \leq n$
}
\BlankLine
sort relevant subforests of $(T_1, T_2)$ into $(L_1, L_2)$ as
in Algorithms~\ref{alg:LR}, \ref{alg:RL}, \ref{alg:prefix-suffix},
or \ref{alg:suffix-prefix} \;
\For{$i \leftarrow 1$ \KwTo $|L_1|$}
{\For{$j \leftarrow 1$ \KwTo $|L_2|$}
{compute $d(L_1[i], L_2[j])$ as in Equation~\ref{eqn:forest-dist} \;
}}
\end{algorithm2e}

\begin{theorem}
The tree edit distance as computed in Algorithm~\ref{alg:TED_n4} takes
$O(m^2n^2)$ time, where $m=|T_1|$, and $n=|T_2|$. 
\end{theorem}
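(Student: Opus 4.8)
The plan is to split the running time into two phases: a preprocessing phase that builds the enumeration sequences $(L_1, L_2)$, and the nested loop that fills in all the forest-forest distances. First I would invoke Lemma~\ref{lem:time-lr-enum} and Lemma~\ref{lem:time-ps-enum} to bound the preprocessing. Whichever of Algorithms~\ref{alg:LR}, \ref{alg:RL}, \ref{alg:prefix-suffix}, or \ref{alg:suffix-prefix} is selected, constructing $L_1$ costs $O(m^2)$ and constructing $L_2$ costs $O(n^2)$, so sorting contributes $O(m^2 + n^2)$ time. These same lemmas certify that $|L_1| = O(m^2)$ and $|L_2| = O(n^2)$, since every enumeration step emits exactly one subforest.

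Next I would analyze the double loop, which iterates once per pair $(L_1[i], L_2[j])$ and hence exactly $|L_1| \times |L_2| = O(m^2 n^2)$ times. The key point is that a single evaluation of $d(L_1[i], L_2[j])$ through Equation~\ref{eqn:forest-dist} costs only $O(1)$. This rests on two facts. First, Equation~\ref{eqn:forest-dist} is a minimum over a fixed number (three) of candidates, each consisting of a previously computed distance plus a constant-cost edit term $\delta(\cdot, \cdot)$. Second, each of the four enumeration orders satisfies the containment invariant that for any subforest $F$, every subforest contained in $F$ is listed before $F$; therefore each of the three subproblems on the right-hand side has already received a value and can be read back by a constant-time table lookup.

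The step I expect to be the main obstacle is making the constant-time-per-cell claim fully rigorous, in particular verifying that the substitution branch $d(F - T, G - T') + d(T, T')$ references only quantities that are already available. Here $d(T, T')$ is a subtree-subtree distance, which is itself one of the earlier-enumerated relevant forest distances (a subtree being the single-root special case of a forest), while $d(F - T, G - T')$ is a strictly smaller relevant subforest pair; both are covered by the containment invariant. Once this is verified, the total time is the sum of the two phases, $O(m^2 + n^2) + O(m^2 n^2) = O(m^2 n^2)$, which gives the stated bound.
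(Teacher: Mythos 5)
Your proposal is correct and follows essentially the same route as the paper, which simply invokes Lemma~\ref{lem:time-lr-enum} and Lemma~\ref{lem:time-ps-enum} to bound the lengths of the enumeration sequences and lets the $O(m^2n^2)$ bound follow from the doubly nested loop. The extra details you supply (the $O(1)$ cost per cell via the containment invariant and the availability of the subtree-subtree term in the substitution branch) are exactly the steps the paper leaves implicit.
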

\begin{proof}
The result follows directly from Lemma~\ref{lem:time-lr-enum}
and \ref{lem:time-ps-enum}.
\end{proof}

The algorithms presented in this section follow a bottom-up 
dynamic programming style
where the tree nodes are numbered in postorder, in contrast to
the preorder numbering of nodes in Tai's algorithm~\cite{Tai}.
The way Tai's algorithm works is to progressively increase the
sizes of the trees, by one node at a time following the preorder numbers, and compute the distance for each such pair of
partial trees\footnote{In fact, it does not compute the true
distance since it only considers the optimal mappings along a pair
of paths for each pair of partial trees, instead of the entire 
partial trees. If Algorithm~\ref{alg:TED_n4} is applied for each
pair of partial trees, the time complexity is easily seen as
$O(m^3n^3)$.}. 

\section{Improved Algorithmic Strategies} \label{sec:strategies}

The algorithm presented in the previous section is based on
the principle of dynamic programming which relies on a
well-defined scheme for enumerating the relevant subforests. 
In this approach, forest distances are arranged in a certain order
so as to facilitate the relay of distance computations.
Essentially, we take advantage of the overlap among subforests
that are contained in the same subtree. 
To make further improvement, we look for ways to
take advantage of the overlap among subtrees as well. 

\subsection{Leftmost Paths} \label{sec:leftmost-paths}

We examine recursion of fixed direction, say rightmost 
recursion, the situation for leftmost recursion being
symmetrical. This means that the enumeration will be in LR-postorder.
Consider a path $(t_1, t_2, \cdots, t_k)$ where $t_i$ is the 
leftmost child of $t_{i+1}$ for $1 \leq i \leq k-1$. 
Let $(T_1, T_2, \cdots, T_k)$ be the sequence of subtrees
where $t_i$ is the root of $T_i$, and 
$(F_1, F_2, \cdots, F_k)$ be the sequence of sets where
$F_i$ denotes the set of subforests of $T_i$ all containing the
leftmost leaf of $T_i$.
We have $F_1 \subset F_2 \subset \cdots \subset F_k$. 
This means that enumerating $F_k$ once effectively takes care of
the enumerations for
$F_1, F_2, \cdots, F_{k-1}$. 
To generalize this situation to the whole tree, we see that
all subtrees sharing the same leftmost leaf can be handled together.
Carried out in this way, a tree is recursively decomposed into disjoint leftmost paths
where each such leftmost path is shared by a set of subtrees which can be handled together along this path with the LR-postorder
enumeration thereby removing the repetitions. 
This strategy was developed by 
Zhang and Shasha~\cite{Zhang:Shasha:TreeEditDistance:SIAMJC1989}.
An example of such path decomposition is given in Figure~\ref{fig:paths_L}. 

\begin{figure}[htbp]
  \begin{center}
    \subfigure[leftmost paths]{\label{fig:paths_L}\includegraphics[scale=0.34]{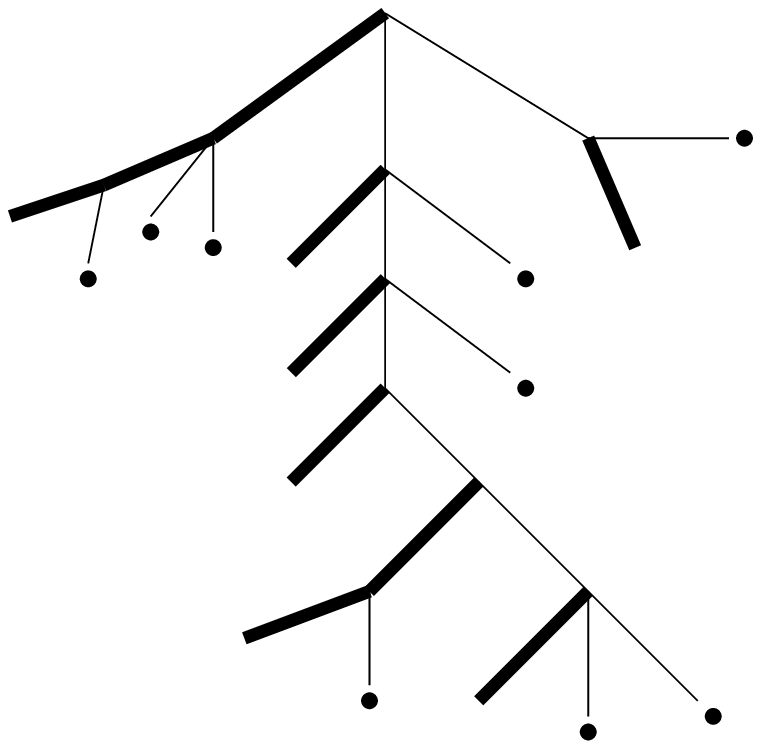}} 
\qquad\qquad \subfigure[rightmost paths]{\label{fig:paths_R}\includegraphics[scale=0.34]{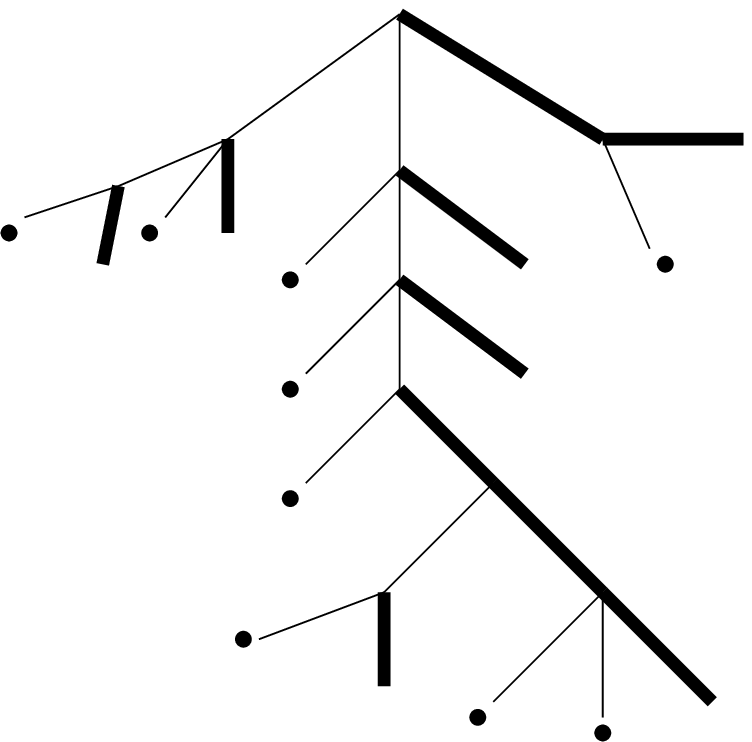}}
  \end{center}
  \caption{Leftmost paths and rightmost paths (in thick edges).}
  \label{fig:paths}
\end{figure}

Each leftmost path corresponds to the smallest subtree that
contains this path, and the root of this subtree is referred
to as an ``LR-keyroot'', which is defined as follows. 

\begin{definition}[LR-keyroots]
An LR-keyroot is either the root of $T$ or has a left sibling.
\end{definition}

The new enumeration scheme works as follows.
We identify all the LR-keyroots in the tree, and sort them
in increasing order by their LR-postorder numbers, referred to 
as ``LR-keyroot postorder''.
This will be the order by which the subforests are enumerated, i.e.,
based on the LR-keyroots with which they are associated. 
The procedure is listed in Algorithm~\ref{alg:LR-keyroot}. 

\begin{algorithm2e}[htbp]
\caption{Construct the enumeration scheme for the subforests of a tree $T$ in LR-keyroot postorder. The RL-based procedure is symmetrical to this. \label{alg:LR-keyroot}}
\SetKwInOut{Input}{input}
\SetKwInOut{Output}{output}

\Input{$T$, with $|T|=n$}
\Output{an enumeration sequence $L$ of subforests of $T$ in the LR-keyroot postorder
}
\BlankLine
identify the LR-keyroots of $T$ \;
sort the LR-keyroots in increasing order of LR-postorder numbers into a list $K=\left\{k_1, k_2, \cdots, k_l\right\}$ \;
\For{$i \leftarrow 1$ \KwTo $l$}
{construct $S_i$ to be a sequence of subforests of $T[k_i]$ 
with the rightmost root enumerated in
LR-postorder \;
}
$L = S_1$ \;
\For{$i \leftarrow 2$ \KwTo $n$}
{$L = L \circ S_i$ \;
}
output $L$ \;
\end{algorithm2e}

This enumeration scheme gives rise to the algorithm in 
Algorithm~\ref{alg:TED_zs}.

\begin{algorithm2e}[htbp]
\caption{Compute tree edit distance in $O(mn \prod_{i=1}^{2} \min\{depth(T_i),\#leaves(T_i)\})$ time.\label{alg:TED_zs}}
\SetKwInOut{Input}{input}
\SetKwInOut{Output}{output}

\Input{$(T_1, T_2)$, with $|T_1|=m$ and $|T_2|=n$}
\Output{$d(T_1[i], T_2[j])$ for 
$1 \leq i \leq m$ and $1 \leq j \leq n$
}
\BlankLine
sort relevant subforests of $(T_1, T_2)$ into $(L_1, L_2)$ as
in Algorithm~\ref{alg:LR-keyroot} \;
\For{$i \leftarrow 1$ \KwTo $|L_1|$}
{\For{$j \leftarrow 1$ \KwTo $|L_2|$}
{compute $d(L_1[i], L_2[j])$ as in Equation~\ref{eqn:forest-dist} \;
}}
\end{algorithm2e}

\begin{theorem} 
The algorithm computes $d(T_1[i], T_2[j])$ for all 
$1 \leq i \leq |T_1|$ and $1 \leq j \leq |T_2|$.
\end{theorem}
\begin{proof}
We prove it by induction on the sizes of the subtrees induced by the
keyroots.

\noindent Base case: This involves only the singleton subtrees. Since all the
basic edit costs with respect to single nodes are already defined, the base case holds.

\noindent Induction hypothesis: For any $(i, j) \in \{(i, j) \mid 
i \in LR\textnormal{-}keyroots(T_1),\  
j \in LR\textnormal{-}keyroots(T_2)\}$,
just before the computation of $d(T_1[i], T_2[j])$, 
the following set of distances have been computed, 
$D = D_1 \cup D_2$ where
\begin{itemize}
\item $D_1 = \{d(T_1[i'], T_2[j']) \mid 
i' \in T_1[i] - leftmost\textnormal{-}path(T_1[i]),\
j' \in T_2[j]\}$, 
\item $D_2 = \{d(T_1[i'], T_2[j']) \mid
i' \in T_1[i],\
j' \in T_2[j] - leftmost\textnormal{-}path(T_2[j])\}$.
\end{itemize}

\noindent Induction step: We show that 
$\{d(T_1[i'], T_2[j']) \mid i' \in T_1[i],\  
j' \in T_2[j]\}$ are all computed. The subtree-subtree distances
to be computed in the process of computing $d(T_1[i], T_2[j])$
are 
$\{d(T_1[i'], T_2[j']) \mid
i' \in leftmost\textnormal{-}path(T_1[i]),\
j' \in leftmost\textnormal{-}path(T_2[j])\}$.
The induction step holds since it is in accord
with the LR-keyroot postorder that the algorithm follows, 
which means that all distances specified
in the induction hypothesis have been computed.
This concludes the proof.
\end{proof}

To see the impact of the leftmost-path decomposition scheme on the time complexity, it is necessary to introduce
the concept of ``LR-collapsed depth'' defined as follows.

\begin{definition}[LR-Collapsed Depth] \label{def:lr-coll-depth}
The LR-collapsed depth of a node $t_i$ is the number of
its ancestors that are LR-keyroots.
The LR-collapsed depth of a tree $T$ is defined as
$LR\textnormal{-}collapsed\textnormal{-}depth(T) = 
\max\left\{LR\textnormal{-}collapsed\textnormal{-}depth(t_i) \mid t_i \in T\right\}$.
\end{definition}

Intuitively, the LR-collapsed depth of a tree $T$ represents the maximal number of non-leaf
LR-keyroots that a path in $T$ may contain. 
We define LR-collapsed depth as a way to estimate the maximal times a node, representing the rightmost root
of some relevant subforest, is enumerated with the LR-keyroot postorder. 
As a consequence of this enumeration scheme, repetitious enumerations
involving a given node are removed since
subtrees containing this node as well as having the same leftmost leaf
are no longer handled separately.

\begin{lemma} \label{lem:lr-coll-depth}
$LR\textnormal{-}collapsed\textnormal{-}depth(T) \leq \min\left\{depth(T), \#leaves(T)\right\}$.
\end{lemma}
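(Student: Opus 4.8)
The plan is to prove the two bounds separately, establishing $LR\textnormal{-}collapsed\textnormal{-}depth(T) \leq depth(T)$ and $LR\textnormal{-}collapsed\textnormal{-}depth(T) \leq \#leaves(T)$, and then to take the minimum. By Definition~\ref{def:lr-coll-depth} it suffices to bound $LR\textnormal{-}collapsed\textnormal{-}depth(t_i)$ for an arbitrary node $t_i$, since the tree quantity is the maximum of this over all nodes.

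First I would dispatch the depth bound, which is immediate. The LR-collapsed depth of $t_i$ counts only those ancestors of $t_i$ that happen to be LR-keyroots, so it is at most the total number of ancestors of $t_i$. That number is at most $depth(T)$, exactly as used in the proof of Lemma~\ref{lem:time-lr-enum}. Hence $LR\textnormal{-}collapsed\textnormal{-}depth(t_i) \leq depth(T)$ for every node, and therefore $LR\textnormal{-}collapsed\textnormal{-}depth(T) \leq depth(T)$.

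The substantive step is the leaf bound, and here the key is to count the LR-keyroots exactly. I would first observe, straight from the definition of an LR-keyroot, that a node fails to be an LR-keyroot precisely when it is a non-root node that is the leftmost child of its parent. Consequently the tree partitions into disjoint leftmost paths: each path is headed by an LR-keyroot $k$ and descends through successive leftmost children until it reaches a leaf. Every node lies on exactly one such path, located by walking upward through the leftmost-child relation until the first LR-keyroot is encountered (at worst the root). I would then argue that assigning to each leftmost path its terminal leaf is a bijection between the leftmost paths and the leaves of $T$: distinct paths are node-disjoint and hence have distinct terminal leaves, while every leaf, having no descendants, is the bottom of the unique path on which it lies. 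Since each leftmost path is headed by exactly one LR-keyroot, this yields $\#LR\textnormal{-}keyroots(T) = \#leaves(T)$.

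With this count in hand the leaf bound follows, since the LR-keyroot ancestors of any node $t_i$ form a subset of all the LR-keyroots of $T$; thus $LR\textnormal{-}collapsed\textnormal{-}depth(t_i) \leq \#LR\textnormal{-}keyroots(T) = \#leaves(T)$, and taking the maximum over $t_i$ gives $LR\textnormal{-}collapsed\textnormal{-}depth(T) \leq \#leaves(T)$. Combining the two bounds yields the claim. I expect the main obstacle to be the bijection argument: one must verify carefully that the leftmost-path decomposition is a genuine partition of the nodes and that the path-to-leaf map is both injective (from disjointness) and surjective (every leaf terminates some path), since the entire leaf bound rests on the exact equality $\#LR\textnormal{-}keyroots(T) = \#leaves(T)$.
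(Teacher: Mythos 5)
Your proof is correct and follows essentially the same route as the paper: the depth bound comes from counting ancestors, and the leaf bound comes from injecting LR-keyroots into leaves --- your terminal leaves of the leftmost paths are exactly the distinct leftmost leaves of the keyroots' subtrees that the paper's proof invokes. The only difference is that you push the counting to a full bijection, obtaining the exact equality $\#LR\textnormal{-}keyroots(T) = \#leaves(T)$ where the paper needs (and proves) only the inequality $\#LR\textnormal{-}keyroots(T) \leq \#leaves(T)$.
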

\begin{proof}
Since the number of LR-keyroots on any path is bounded by the
depth of the path, we have 
$LR\textnormal{-}collapsed\textnormal{-}depth(T) \leq depth(T)$.
For any two LR-keyroots $k_i$ and $k_j$, the subtrees $T_i$ and $T_j$
rooted at $k_i$ and $k_j$ have distinct leftmost leaves. This means
that
the number of subtrees in $T$ that are rooted at LR-keyroots can not
exceed the number of leaves, i.e., 
$\#LR\textnormal{-}keyroots(T) \leq \#leaves(T)$. 
Since the number of LR-keyroots on any path is no more than the total
number of LR-keyroots in the tree, i.e.,
$LR\textnormal{-}collapsed\textnormal{-}depth(T) \leq \#LR\textnormal{-}keyroots(T)$,
we have $LR\textnormal{-}collapsed\textnormal{-}depth(T) \leq \#leaves(T)$.
Therefore, $LR\textnormal{-}collapsed\textnormal{-}depth(T)$ can be bounded by
$depth(T)$ or $\#leaves(T)$, whichever is smaller.
This concludes the proof.
\end{proof}

Here is the implication of Lemma~\ref{lem:lr-coll-depth}.
In the previous procedure, a node in $T$ may be enumerated 
$depth(T)$ times with the LR-postorder enumeration scheme,
because the maximal number of subtrees in which a node
may be contained is $depth(T)$. 
Grouping together subtrees with the same leftmost leaf
can remove the repetitions, and the improvement is evident since
the upper bound is reduced from $depth(T)$ to
$\min\left\{depth(T), \#leaves(T)\right\}$. 

\begin{theorem}
The tree edit distance problem can be solved in 
$O(mn \prod_{i=1}^{2} \min\{depth(T_i),\#leaves(T_i)\})$ time, where
$m = |T_1|$ and $n = |T_2|$.
\begin{proof}
From Lemma~\ref{lem:lr-coll-depth}, each node, representing the rightmost root of some relevant subforest, in $T$ is enumerated
at most $LR\textnormal{-}collapsed\textnormal{-}depth(T)$ times using
the enumeration scheme in Algorithm~\ref{alg:LR-keyroot}.
Hence, the result follows directly. 
\end{proof}
\end{theorem}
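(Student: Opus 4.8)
The plan is to bound the running time by the number of iterations of the nested loop in Algorithm~\ref{alg:TED_zs}. Each execution of the loop body evaluates $d(L_1[i], L_2[j])$ through Equation~\ref{eqn:forest-dist}, which amounts to selecting the minimum among three quantities, each of which is a previously tabulated forest or subtree distance; by the enumeration order of Algorithm~\ref{alg:LR-keyroot} (and the correctness argument already established) these values are available when needed, so each evaluation costs $O(1)$. Hence the total time is $O(|L_1|\cdot|L_2|)$, and the whole problem reduces to bounding $|L_a|$, the length of the enumeration sequence produced by Algorithm~\ref{alg:LR-keyroot} for $T_a$, $a\in\{1,2\}$.

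To bound $|L_a|$, I would write it as a sum over nodes. Exactly as in the proof of Lemma~\ref{lem:time-lr-enum}, every relevant subforest in the sequence is uniquely identified by its rightmost root, so $|L_a| = \sum_{t_i\in T_a} c(t_i)$, where $c(t_i)$ counts the relevant subforests enumerated with $t_i$ as rightmost root. The crux is to show $c(t_i)\le LR\textnormal{-}collapsed\textnormal{-}depth(T_a)$. Here I would invoke the two structural facts isolated in the discussion preceding Lemma~\ref{lem:lr-coll-depth}: first, the nesting $F_1\subset F_2\subset\cdots\subset F_k$ along any leftmost path means that, among all subtrees sharing a common leftmost leaf, the subforest having $t_i$ as rightmost root is enumerated only once; second, distinct LR-keyroots have distinct leftmost leaves, so the surviving enumerations of $t_i$ are in one-to-one correspondence with the LR-keyroots $k$ whose subtree $T_a[k]$ contains $t_i$. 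The latter are precisely the LR-keyroot ancestors of $t_i$ (together with $t_i$ itself when it is a keyroot), whose number is at most $LR\textnormal{-}collapsed\textnormal{-}depth(T_a)$ by Definition~\ref{def:lr-coll-depth}, up to a single additive term for $t_i$ itself that is harmless asymptotically.

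With $c(t_i)\le LR\textnormal{-}collapsed\textnormal{-}depth(T_a)$ in hand, summing over the $|T_a|$ nodes gives $|L_a| = O\bigl(|T_a|\cdot LR\textnormal{-}collapsed\textnormal{-}depth(T_a)\bigr)$. Applying Lemma~\ref{lem:lr-coll-depth} to replace each collapsed depth by $\min\{depth(T_a),\#leaves(T_a)\}$ and multiplying the bounds for $a=1$ and $a=2$ then yields $O\!\bigl(mn\prod_{i=1}^{2}\min\{depth(T_i),\#leaves(T_i)\}\bigr)$, as required. It is worth noting that $|L_1|\cdot|L_2| = \bigl(\sum_{k_1}|T_1[k_1]|\bigr)\bigl(\sum_{k_2}|T_2[k_2]|\bigr)$, so this flat double-loop count agrees with the per-keyroot-pair view in which each keyroot pair $(k_1,k_2)$ contributes $|T_1[k_1]|\cdot|T_2[k_2]|$.

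The main obstacle is the middle step, namely pinning down $c(t_i)$ precisely. It is easy to assert informally that grouping subtrees by common leftmost leaf removes repetitions, but turning this into the clean count ``number of keyroot ancestors of $t_i$'' requires care: one must verify that the nesting property genuinely collapses all the subforests within a single leftmost path to a single enumeration of $t_i$, and that no two distinct keyroot ancestors can ever collapse together, which is exactly where the distinct-leftmost-leaf property is used. The only remaining subtlety is the constant-factor slack between the count for a fixed node and the tree-wide maximum $LR\textnormal{-}collapsed\textnormal{-}depth(T_a)$, together with the off-by-one arising from whether $t_i$ is counted as its own ancestor; both are absorbed into the $O(\cdot)$ and do not affect the stated complexity.
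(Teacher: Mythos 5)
Your proposal is correct and follows essentially the same route as the paper: the paper's own (much terser) proof also rests on the single claim that each node, as the rightmost root of a relevant subforest, is enumerated at most $LR\textnormal{-}collapsed\textnormal{-}depth(T)$ times under Algorithm~\ref{alg:LR-keyroot}, and then multiplies the per-tree bounds via Lemma~\ref{lem:lr-coll-depth}. You simply fill in the supporting details (the $O(1)$ cost per table entry, the identification of the enumeration count with the number of LR-keyroot ancestors, and the harmless off-by-one) that the paper leaves implicit.
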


\begin{theorem}
The tree edit distance problem can be solved in 
$O(mn)$ space, where
$m = |T_1|$ and $n = |T_2|$.
\end{theorem}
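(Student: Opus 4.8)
The plan is to split the storage into a \emph{permanent} part holding the quantities the algorithm is required to output and a \emph{temporary} part holding the intermediate forest distances, and then to argue that the temporary part can be reused across keyroot pairs so that only a single copy is ever live. Combining these two bounds will give the claimed $O(mn)$.

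First I would account for the permanent storage. The algorithm must report $d(T_1[i], T_2[j])$ for all $1 \le i \le m$ and $1 \le j \le n$, so I would maintain a tree-distance table of size $(m+1)(n+1) = O(mn)$, indexed by the LR-postorder numbers of the two subtree roots. This table is never overwritten: each entry is written once, when the corresponding tree distance becomes available, and is thereafter only read. Observe that this also furnishes a matching lower bound, since $mn$ distinct values must be produced, so $\Omega(mn)$ space is necessary and $O(mn)$ is the best attainable.

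Next I would bound the temporary storage. For a fixed pair of LR-keyroots $(k_i, k_j)$, the algorithm fills a forest-distance table whose rows and columns are indexed by the relevant subforests of $T_1[k_i]$ and $T_2[k_j]$; under the leftmost-path decomposition each such subforest is identified by its rightmost root, so the table has at most $(|T_1[k_i]|+1)(|T_2[k_j]|+1) = O(mn)$ entries. The crucial observation is that the recursion of Equation~\ref{eqn:forest-dist}, evaluated for this keyroot pair, references only two kinds of previously computed values: forest distances of the forms $d(F - r(T), G)$, $d(F, G - r(T'))$, and $d(F - T, G - T')$, which all lie within the \emph{same} $(k_i, k_j)$ table; and tree distances $d(T, T')$ for proper subtrees $T, T'$ hanging off the two leftmost paths, whose roots have left siblings and are therefore themselves LR-keyroots processed earlier in the LR-keyroot postorder of Algorithm~\ref{alg:LR-keyroot}. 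The latter values reside in the permanent table. Consequently no computation for a later keyroot pair ever reads an intermediate forest-distance entry produced for $(k_i, k_j)$; the only information that must survive is the tree distances, already copied into the permanent table. Hence a single temporary table of size $O(mn)$ can be overwritten and reused for every keyroot pair.

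Adding the two contributions gives total space $O(mn) + O(mn) = O(mn)$. I expect the main obstacle to be the reuse argument of the third step: specifically, confirming that the cross-subtree dependencies in Equation~\ref{eqn:forest-dist} are mediated \emph{exclusively} through tree distances $d(T, T')$, which are persisted, and never through the intermediate forest distances of an earlier keyroot pair. Once this dependency structure is verified, together with the fact that the LR-keyroot postorder guarantees each needed tree distance is present in the permanent table before it is read, the bound follows immediately and matches the $\Omega(mn)$ lower bound noted above.
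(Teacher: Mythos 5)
Your proposal is correct and takes essentially the same approach as the paper, which likewise uses two $m \times n$ tables: a permanent table $D_t$ for subtree--subtree distances and a forest-distance table $D_f$ whose entries are overwritten when the computation moves to the next keyroot pair. Your explicit justification that overwriting is safe (all cross-keyroot dependencies pass only through persisted tree distances) is the detail the paper leaves implicit; the one slight imprecision is that the tree distances fetched from the permanent table need not be rooted at LR-keyroots themselves, but rather at nodes lying on the leftmost path of some smaller, earlier-processed keyroot subtree.
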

\begin{proof}
The computation uses two $m \times n$ tables $D_t$ and $D_f$.
The forest-forest distances are computed in $D_f$ where the values
can be overwritten when the computation moves from one pair of
subtrees to another pair. The subtree-subtree distances obtained
in the process of computing the forest-forest distances are stored
in $D_t$, and fetched for use in computing forest-forest distances.
\end{proof}

In this section, a new way is presented for enumerating the
relevant subforests in LR-postorder where repetitious steps
associated with the leftmost paths in a tree
are eliminated, resulting in an improved time complexity. 
However, depending on the shapes of the trees, the leftmost-path decomposition for some tree shapes could yield
marginal benefits regarding the running time.  
This leads to the strategy to be presented in the next section. 

\subsection{Heavy Paths on One Tree} \label{sec:heavy-paths-n3logn}

We see from the previous section that the computation time
is due to the enumeration of subforests where each enumeration step
counts a constant time in performing a few simple arithmetics. 
The leftmost-path strategy improves the time complexity by enumerating subtrees with overlapping leftmost
paths together in the same sequence of computation. 
Since the running time is dependent on the shapes
of the trees,
it is worthwhile to consider a different type of path decomposition
that can also offer benefits with respect to the complexity. 
This possibility was explored and a new decomposition
strategy based on a type of path referred to 
as ``heavy path'' is due to Klein~\cite{klein:98}.
In contrast to the Zhang-Shasha strategy, which may be seen as
a way of improving upon the naive fixed-direction procedure based on the LR-postorder
enumeration scheme given in Section~\ref{sec:ted}, the new 
strategy may be seen as a way of improving upon the
variable-direction procedure based on the prefix-suffix or
suffix-prefix postorder enumeration scheme.
We give a few definitions related to the idea behind heavy path.

\begin{definition}[Heavy Child/Node] \label{def:heavy-child}
For any node $t$ in $T$, the child $t_h$ which is the root of the
largest subtree (breaking tie arbitrarily) among the sibling subtrees is the heavy child of $t$. We use the terms ``heavy child'' and
``heavy node'' interchangeably. 
\end{definition}

The definition of heavy path is given as follows.

\begin{definition}[Heavy Path]~\cite{Sleator:Tarjan:1983,Harel-Tarjan:SIAM1984} \label{def:heavy-path}
The heavy path of a tree $T$ is a unique path connecting the
root and a leaf of $T$ on which every node, except the root,
is a heavy node. 
\end{definition}

Figure~\ref{fig:paths_H} shows an example of a tree recursively decomposed into
a set of heavy paths.

\begin{figure}[htbp]
  \begin{center}
  \includegraphics[scale=0.34]{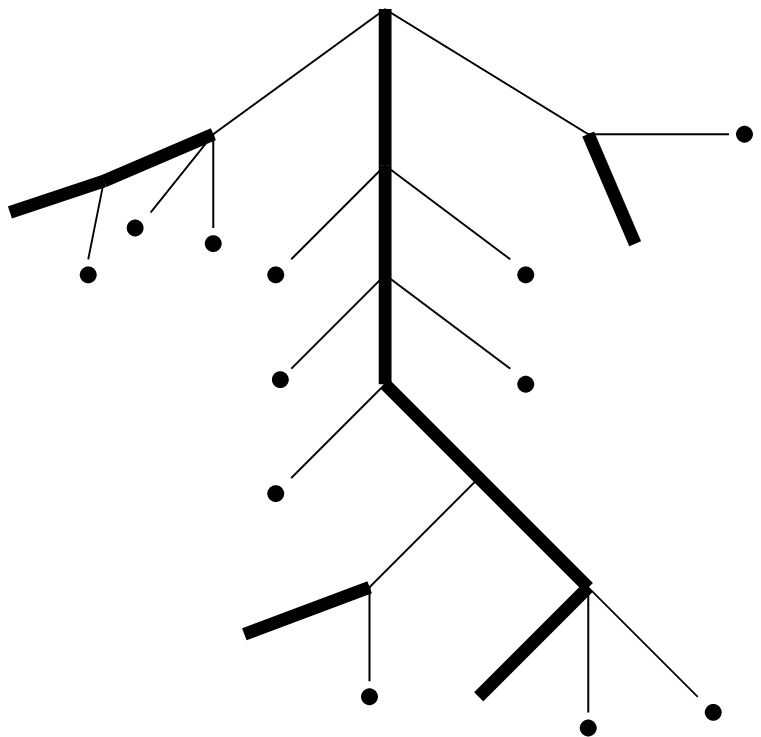}
  \end{center}
  \caption{Heavy paths (in thick edges).}
  \label{fig:paths_H}
\end{figure}

Similar to LR and RL-postorder which are defined with respect to
the leftmost path and rightmost path, respectively, we define an enumeration scheme
with respect to the heavy path as follows.

\begin{definition}[H-Postorder] \label{def:h-postorder}
The nodes in tree $T$ is enumerated in H-postorder as follows.
Start at the leaf $t_l$ on $heavy\textnormal{-}path(T)$, enumerate
the subtrees rooted on
its right siblings, if any, in LR postorder, 
then the subtrees rooted on its left siblings, if any, in RL postorder.
Continue and repeat the same process with each next higher node on
$heavy\textnormal{-}path(T)$ until reaching $root(T)$.
\end{definition}

If we ignore what happens on the left side of the heavy-path during
an H-postorder enumeration, then we see a sequence of enumeration
steps identical to an LR-postorder enumeration. 
If we ignore what happens on the right side of the heavy-path during
an H-postorder enumeration, then we see a sequence of enumeration
steps identical to an RL-postorder enumeration. 
Alternatively, a second version symmetrical to this one, i.e., RL then LR
intermittently, also works. In the following presentation, the 
version in Definition~\ref{def:h-postorder} is used.
An example of enumerating subforests in H-postorder is given in
Figure~\ref{fig:h-postorder}.
\begin{figure}[htbp]
\begin{center}
\includegraphics[scale=0.6]{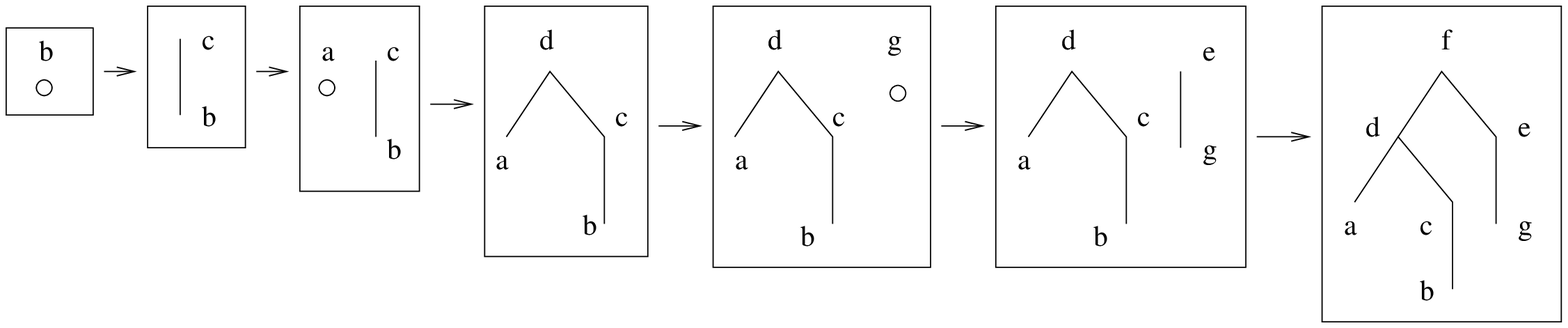}
\end{center}
\caption{An example of enumerating subforests in H-postorder.}
\label{fig:h-postorder}
\end{figure}

Analogous to LR-keyroots, a type of keyroots specific to this context
is defined as follows.
\begin{definition}[H-keyroots] \label{def:hkeyroots}
An H-keyroot is either the root of $T$ or the root of a subtree
in $T$ that has a larger sibling subtree. If multiple subtrees
are equally the largest among their sibling subtrees, all but one (chosen
arbitrarily) are H-keyroots.
\end{definition}

Definitions \ref{def:heavy-child} and \ref{def:hkeyroots} are
equivalent since for any node, once its heavy child is specified, the other children are H-keyroots, and vice versa.
A node in a tree is either a heavy node or an H-keyroot. 

The algorithm works as follows.
The H-keyroots in the larger tree are sorted into a list $L_1$
in increasing H-postorder numbers.
For each subtree of which the root is in $L_1$, order the
relevant subforests in H-postorder, and concatenate all
the ordered sequences to form the entire sequence as listed in Algorithm~\ref{alg:H-keyroot}, which we call the ``H-keyroot postorder''. 
On the smaller
tree, all subforests are ordered into a list $L_2$ in prefix-suffix or suffix-prefix
postorder, as in Algorithms~\ref{alg:prefix-suffix} or \ref{alg:suffix-prefix}.
The new algorithm is listed in Algorithm~\ref{alg:TED_klein}.

\begin{algorithm2e}[htbp]
\caption{Construct the enumeration scheme for the subforests of a tree $T$ in H-keyroot postorder. \label{alg:H-keyroot}}
\SetKwInOut{Input}{input}
\SetKwInOut{Output}{output}

\Input{$T$, with $|T|=n$}
\Output{an enumeration sequence $L$ of subforests of $T$ in the H-keyroot postorder
}
\BlankLine
identify the H-keyroots of $T$ \;
sort the H-keyroots in increasing order of H-postorder numbers into a list $K=\left\{k_1, k_2, \cdots, k_l\right\}$ \;
\For{$i \leftarrow 1$ \KwTo $l$}
{construct $S_i$ to be a sequence of subforests of $T[k_i]$ enumerated in
H-postorder \;
}
$L = S_1$ \;
\For{$i \leftarrow 2$ \KwTo $n$}
{$L = L \circ S_i$ \;
}
output $L$ \;
\end{algorithm2e}

\begin{algorithm2e}[htbp]
\caption{Compute tree edit distance in $O(m^2n\log n)$ time.\label{alg:TED_klein}}
\SetKwInOut{Input}{input}
\SetKwInOut{Output}{output}

\Input{$(T_1, T_2)$, with $|T_1|=m$, $|T_2|=n$, and $m \leq n$}
\Output{$d(T_1[i], T_2[j])$ for 
$1 \leq i \leq m$ and $1 \leq j \leq n$
}
\BlankLine
sort relevant subforests of $T_1$ into $L_1$ as
in Algorithms~\ref{alg:prefix-suffix} or \ref{alg:suffix-prefix},
and $T_2$ into $L_2$ as in Algorithm~\ref{alg:H-keyroot} \;
\For{$i \leftarrow 1$ \KwTo $|L_1|$}
{\For{$j \leftarrow 1$ \KwTo $|L_2|$}
{compute $d(L_1[i], L_2[j])$ as in Equation~\ref{eqn:forest-dist} \;
}}
\end{algorithm2e}

\begin{theorem} 
The algorithm computes $d(T_1[i], T_2[j])$ for all 
$1 \leq i \leq |T_1|$ and $1 \leq j \leq |T_2|$.
\end{theorem}
\begin{proof}
We prove it by induction on the sizes of the subtrees induced by the
keyroots.

\noindent Base case: This involves only the singleton subtrees. Since all the
basic edit costs with respect to single nodes are already defined, the base case holds.

\noindent Induction hypothesis: For any 
$k \in \{k \mid k \in H\textnormal{-}keyroots(T_2)\}$, 
just before the computation of $d(T_1, T_2[k])$, 
$\{d(T_1[i], T_2[j]) \mid
i \in T_1,\ j \in T_2[k] - heavy\textnormal{-}path(T_2[k])\}$
have been computed.

\noindent Induction step: We show that 
$\{d(T_1[i], T_2[j]) \mid i \in T_1,\ j \in T_2[k]\}$ are all computed. The subtree-subtree distances
to be computed in the process of computing $d(T_1, T_2[k])$
are $\{d(T_1[i], T_2[j]) \mid i \in T_1,\ j \in heavy\textnormal{-}path(T_2[k])\}$.
The induction step holds since it is in accord
with the postorder that the algorithm follows, which means that all distances specified
in the induction hypothesis have been computed.
This concludes the proof.
\end{proof}

We consider some aspects of the time complexity for this algorithm
as follows.

\begin{lemma} \label{lem:h-keyroots-sizes}
Let $h_1, h_2, \cdots, h_k$ be any sequence of H-keyroots that are on the same
path where $h_i$ is an ancestor of $h_j$ if $i < j$. Then, $|T[h_j]| \leq |T[h_i]|/2$ if $j = i+1$. 
\end{lemma}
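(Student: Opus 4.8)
The plan is to reduce the statement to a single consecutive pair and to exploit the one defining feature of an H-keyroot: by Definition~\ref{def:hkeyroots} (and the remark that every node is either a heavy node or an H-keyroot), an H-keyroot other than $r(T)$ is \emph{not} the heavy child of its parent, so it is out-weighed by its heavy sibling. It therefore suffices to prove the bound for $h_i$ and $h_{i+1}$ directly, since $h_{i+1}$ is a proper descendant of $h_i$ and is itself an H-keyroot.

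First I would set $p$ to be the parent of $h_{i+1}$ and let $s$ be the heavy child of $p$. Since $h_{i+1}$ is an H-keyroot lying strictly below $h_i$, it is not the root of $T$ and hence not the heavy child of $p$; thus $s \neq h_{i+1}$, and by Definition~\ref{def:heavy-child} the heavy sibling satisfies $|T[s]| \geq |T[h_{i+1}]|$. The subtrees $T[h_{i+1}]$ and $T[s]$ are vertex-disjoint and both sit inside $T[p]$ alongside the node $p$ itself, so
\[
|T[p]| \;\geq\; 1 + |T[h_{i+1}]| + |T[s]| \;\geq\; 2\,|T[h_{i+1}]|.
\]

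Next I would bound $|T[p]|$ by $|T[h_i]|$. Because $h_i$ is an ancestor of $h_{i+1}$ and the two keyroots are distinct, $h_{i+1}$ is a non-root node of $T[h_i]$, so its parent $p$ also lies in $T[h_i]$ and $T[p] \subseteq T[h_i]$; hence $|T[p]| \leq |T[h_i]|$. Chaining the two displayed inequalities gives $2\,|T[h_{i+1}]| \leq |T[p]| \leq |T[h_i]|$, that is, $|T[h_{i+1}]| \leq |T[h_i]|/2$, which is exactly the claim for $j = i+1$.

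I do not expect a genuine obstacle; the argument is a one-step count of two disjoint subtrees. The only points needing care are that the heavy sibling $s$ is distinct from $h_{i+1}$ --- guaranteed by the tie-breaking clause in Definition~\ref{def:hkeyroots}, which forces $h_{i+1}$ to be a non-heavy child --- and that the parent $p$ of $h_{i+1}$ is still contained in $T[h_i]$, which follows immediately from $h_i$ being a strict ancestor of $h_{i+1}$. Note also that the bound does not require $h_i$ and $h_{i+1}$ to be adjacent keyroots on the path; the same computation applies verbatim whenever $h_{i+1}$ is any H-keyroot below $h_i$, with the inequality only becoming looser if intermediate keyroots are present.
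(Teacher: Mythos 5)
Your proof is correct and rests on the same key fact as the paper's: a non-root H-keyroot is not the heavy child of its parent, so its subtree occupies at most half of the parent's subtree, which in turn lies inside $T[h_i]$. The paper phrases this as a contradiction argument split into two cases (parent equal to $h_i$ versus an intermediate node $t$ on the path), whereas you give a direct, unified count via the parent $p$ and its heavy child $s$; the content is the same.
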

\begin{proof}
Suppose that $|T[h_j]| > |T[h_i]|/2$. There are two cases to consider.

\begin{enumerate}
\item \label{case:light-parent} The nodes $h_i$ and $h_j$ are consecutive nodes on the path.
\item \label{case:heavy-parent} The nodes $h_i$ and $h_j$ are not consecutive nodes on the path.
\end{enumerate}

In case~\ref{case:light-parent}, $h_i$ is the parent of $h_j$.
If $|T[h_j]| > |T[h_i]|/2$, $h_j$ is the heavy child of $h_i$,
which is a contradiction to the fact that $h_j$ is an H-keyroot.
In case~\ref{case:heavy-parent}, there exists a node $t$ on the path
that is a descendent of $h_i$ as well as the parent of $h_j$.
Since $|T[h_j]| > |T[h_i]|/2$ and $|T[h_i]| > |T[t]|$, 
we have $|T[h_j]| > |T[t]|/2$.
This means that $h_j$ is the heavy child of $t$, contradicting the fact
that $h_j$ is an H-keyroot. This concludes the proof. 
\end{proof}

Analogous to LR-collapsed depth, a new version of collapsed depth
based on H-keyroots is defined as follows.

\begin{definition}[H-Collapsed Depth] \label{def:h-coll-depth}
The H-collapsed depth of a node $t_i$ is the number of
its ancestors that are H-keyroots.
The H-collapsed depth of a tree $T$ is defined as
$H\textnormal{-}collapsed\textnormal{-}depth(T) = 
\max\{H\textnormal{-}collapsed\textnormal{-}depth(t_i) \mid t_i \in T\}$.
\end{definition}

\begin{lemma} \label{lem:num-h-keyroots}
$H\textnormal{-}collapsed\textnormal{-}depth(T) \leq \log_2|T|$.
\end{lemma}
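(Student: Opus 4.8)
The plan is to bound $H\textnormal{-}collapsed\textnormal{-}depth(T)$ by controlling, for an arbitrary node, how many of its ancestors can simultaneously be H-keyroots. The engine of the argument is Lemma~\ref{lem:h-keyroots-sizes}: along any root-to-leaf path, the subtree sizes at consecutive H-keyroots drop by at least a factor of two. Since subtree sizes cannot fall below $1$, this geometric decay immediately caps the length of any chain of H-keyroots, and hence the collapsed depth, at something logarithmic in $|T|$.

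Concretely, first I would fix a node $t$ whose number of H-keyroot ancestors equals $H\textnormal{-}collapsed\textnormal{-}depth(T)=k$, and list those ancestor H-keyroots as $h_1, h_2, \ldots, h_k$ ordered from the shallowest (nearest $r(T)$) to the deepest (nearest $t$); they all lie on the single path from $r(T)$ to $t$, so the hypothesis of Lemma~\ref{lem:h-keyroots-sizes} is met for each consecutive pair. Next I would chain the per-step halving bound to obtain $|T[h_1]| \ge 2\,|T[h_2]| \ge \cdots \ge 2^{k-1}\,|T[h_k]|$. Combining this with $|T[h_1]| \le |T|$ gives $2^{k-1}\,|T[h_k]| \le |T|$, after which taking logarithms yields a bound on $k$.

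The one step that needs care — and which I expect to be the main obstacle — is getting the constant exactly right, since the bare chaining above only yields $2^{k-1} \le |T|$, i.e. $k \le 1 + \log_2|T|$, one short of the claimed bound. The fix is to observe that the deepest keyroot ancestor $h_k$ is a \emph{proper} ancestor of the witness node $t$, so the subtree $T[h_k]$ contains at least the two distinct nodes $h_k$ and $t$; hence $|T[h_k]| \ge 2$. Substituting this into $2^{k-1}\,|T[h_k]| \le |T|$ upgrades the estimate to $2^{k} \le |T|$, which is exactly $k \le \log_2|T|$. The degenerate cases $k=0$ and $|T|=1$ are handled by inspection. I would also verify the indexing so that Lemma~\ref{lem:h-keyroots-sizes} is invoked only on genuinely consecutive H-keyroots along the path, as its statement requires.
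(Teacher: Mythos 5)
Your proposal is correct and follows essentially the same route as the paper: both chain Lemma~\ref{lem:h-keyroots-sizes} along the path of H-keyroot ancestors to get a geometric decay of subtree sizes and hence a logarithmic bound on their number. If anything, your argument is tighter than the paper's informal ``at most $\log_2|T|$ encounters to reach size $1$'' step, since your observation that the deepest H-keyroot ancestor is a \emph{proper} ancestor (so its subtree has size at least $2$) is exactly what pins down the bound at $\log_2|T|$ rather than $1+\log_2|T|$.
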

\begin{proof}
Consider a path $P$ in $T$ and the H-keyroots 
$h_0, h_1, h_2, \cdots, h_k$ on $P$ with $h_0$ being the root of $T$.
From Lemma~\ref{lem:h-keyroots-sizes},
each H-keyroot $h_i$ on $P$ is rooted at a subtree the size of which is
no larger than half the size of the subtree rooted at
$h_{i-1}$. Starting at $h_0$, 
traverse down the path $P$. 
For each subsequent H-keyroot that is being visited, the corresponding subtree size is reduced by at least a factor of 2 with respect to
the nearest H-keyroot previously visited. It takes at most $\log_2|T|$
encounters of H-keyroots for the subtree size to be reduced to 1,
which is also the maximal number of H-keyroots a node may have as
its ancestors. This concludes the proof.
\end{proof}

In contrast to LR-collapsed depth, H-collapsed depth has an
improved upper bound on the number of times that a node in the larger
tree may be enumerated, 
which is related to how many separate distance
computations, as identified by distinct keyroots, in which a node may participate. 
The bound, on the other hand, for
a node in the smaller tree to be enumerated is the size of the tree,
since all the subforests are considered. 
The overall impact on the time complexity is given in the next theorem.

\begin{theorem}
The tree edit distance problem can be solved in $O(m^2n\log n)$ time where $|T_1|=m$, $|T_2|=n$, and $m \leq n$.
\end{theorem}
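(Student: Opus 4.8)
The plan is to bound the total running time of Algorithm~\ref{alg:TED_klein} by the number of subforest pairs it processes, since each execution of the inner loop body computes $d(L_1[i],L_2[j])$ in constant time via the three-way minimum of Equation~\ref{eqn:forest-dist}. Because the two loops range over all of $L_1$ and all of $L_2$, the running time is $O(|L_1|\cdot|L_2|)$, exactly as in the $O(m^2n^2)$ bound for Algorithm~\ref{alg:TED_n4}; the whole improvement therefore comes from replacing the quadratic enumeration of the larger tree by the H-keyroot enumeration. First I would invoke Lemma~\ref{lem:time-ps-enum} to get $|L_1|=O(m^2)$, since the smaller tree is enumerated in prefix-suffix (or suffix-prefix) postorder. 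It then remains to prove the key bound $|L_2|=O(n\log n)$ for the H-keyroot postorder of the larger tree.

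For the key bound I would argue in two steps. First, I claim that a single H-keyroot subtree $T_2[k]$ contributes only $|T_2[k]|$ relevant subforests to $L_2$, that is, a linear and not a quadratic number. This mirrors the counting in the proof of Lemma~\ref{lem:time-lr-enum}: when the subforests of $T_2[k]$ are listed in H-postorder, each node of $T_2[k]$ is the unique newly exposed boundary root of exactly one relevant subforest within $T_2[k]$ --- its rightmost root if the node lies to the right of $heavy\textnormal{-}path(T_2[k])$, its leftmost root if it lies to the left, and the single tree root when it lies on the path itself. Committing to one heavy path per keyroot is precisely what avoids the $O(|T_2[k]|^2)$ blow-up of the unrestricted prefix-suffix enumeration, in which every pair of leftmost and rightmost roots is realized.

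Second, I would sum this linear count over all H-keyroots by a charging argument. Since $v\in T_2[k]$ exactly when the H-keyroot $k$ is an ancestor of $v$ (or $k=v$), we have
\[
|L_2| \;=\; \sum_{k\in H\textnormal{-}keyroots(T_2)} |T_2[k]|
\;=\; \sum_{v\in T_2}\bigl|\{\,k\in H\textnormal{-}keyroots(T_2):v\in T_2[k]\,\}\bigr|,
\]
and the inner quantity is the number of H-keyroot ancestors of $v$, which is at most $H\textnormal{-}collapsed\textnormal{-}depth(T_2)+1$. Applying Lemma~\ref{lem:num-h-keyroots} gives $|L_2|\le n\,(\log_2 n+1)=O(n\log n)$. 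Combining the two estimates yields a running time of $O(|L_1|\cdot|L_2|)=O(m^2)\cdot O(n\log n)=O(m^2 n\log n)$, as required.

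I expect the main obstacle to be the first step of the key bound, namely justifying that following the designated heavy path inside each keyroot subtree produces only a linear number of relevant subforests. The subtlety is that on the smaller tree we deliberately retain the full quadratic prefix-suffix enumeration, so that a compatible subforest of $T_1$ is available no matter whether the H-postorder on $T_2$ is currently peeling from the left or from the right, whereas on the larger tree the heavy-path commitment collapses the enumeration to linear per keyroot. Reconciling these two regimes --- and confirming that the resulting orderings $L_1,L_2$ still satisfy the prerequisite that every subproblem needed in Equation~\ref{eqn:forest-dist} is computed earlier, which is already guaranteed by the correctness theorem for this algorithm --- is where the argument needs the most care.
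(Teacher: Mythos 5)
Your argument is correct and in substance identical to the paper's proof: both rest on the facts that the prefix-suffix (or suffix-prefix) enumeration of the smaller tree yields $O(m^2)$ relevant subforests and that each node of $T_2$ lies in at most $1+\log_2 n$ subtrees rooted at distinct H-keyroots (Lemma~\ref{lem:num-h-keyroots}), in each of which it is enumerated once along the heavy path. The paper states the count per node pair $(i,j)$ rather than as the product $|L_1|\cdot|L_2|$, but the accounting is the same.
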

\begin{proof}
For any $(i, j)$ with $i \in T_1$ and $j \in T_2$, $i$ is enumerated
the number of times equal the number of subforests with distinct
leftmost roots which contain $i$ as the rightmost root, or alternatively, the number of subforests with distinct
rightmost roots which contain $i$ as the leftmost root.
This is bounded by the size of $T_1$, i.e., $m$. On the other hand,
$j$ is enumerated at most $1+\log_2 n$ times according to
Lemma~\ref{lem:num-h-keyroots}, since this is the
upper bound on the number of subtrees in $T_2$ rooted on distinct
H-keyroots which contain $j$,
and in each one $j$ is enumerated once.
The result thus follows. 
\end{proof}

\begin{theorem} \label{thm:klein-space}
The new algorithm solves the tree edit distance problem in $O(mn)$ space where $|T_1|=m$, $|T_2|=n$, and $m \leq n$.
\end{theorem}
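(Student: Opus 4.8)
The plan is to mirror the $O(mn)$-space argument already given for the Zhang--Shasha algorithm, maintaining two tables and charging all persistent storage to the subtree-to-subtree distances. First I would allocate an $m \times n$ table $D_t$ that stores, permanently, every distance $d(T_1[i], T_2[j])$ with $i \in T_1$ and $j \in T_2$; these are the required outputs, and they are exactly the quantities that enter the third branch of Equation~\ref{eqn:forest-dist} as the atomic term $d(T, T')$. Since any such subtree distance may be consumed much later in the computation, none of these $mn$ entries can be discarded, but there are only $mn$ of them, so $D_t$ costs $O(mn)$. The remaining task is to show that the forest-to-forest distances computed by Algorithm~\ref{alg:TED_klein} can be held in a second table $D_f$ of size $O(mn)$ whose entries are overwritten as the computation advances.

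The key step is to index $D_f$ by a single pair of boundary nodes, one from each tree, rather than by full relevant subforests. I would argue that at every instant the relevant subforest of $T_2[k]$ currently under consideration is pinned down by one varying boundary root $q$ on $heavy\textnormal{-}path(T_2[k])$ (the other extreme being fixed by the H-postorder phase of Definition~\ref{def:h-postorder}), and likewise the relevant subforest of $T_1$ is pinned down by one varying boundary root $p$ (the other extreme being fixed by the current rightmost root in the prefix-suffix enumeration of Algorithm~\ref{alg:prefix-suffix}, or the leftmost root in the suffix-prefix enumeration of Algorithm~\ref{alg:suffix-prefix}). Because the three branches of Equation~\ref{eqn:forest-dist} refer only to immediate predecessors of $(F, G)$ --- a forest with one boundary subtree removed, together with one subtree distance fetched from $D_t$ --- the live forest distances needed to advance the recursion at boundary pair $(p, q)$ are those of neighbouring boundary pairs. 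Hence a single current value per pair $(p, q)$ suffices, giving $O(mn)$ entries, and I would reuse the same $D_f$ across successive H-keyroots of $T_2$ (Algorithm~\ref{alg:H-keyroot}) and across successive boundary roots of $T_1$, overwriting stale values.

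The main obstacle will be justifying the overwriting under the variable-direction enumeration forced on $T_1$: because H-postorder alternates between LR-postorder on the right of the heavy path and RL-postorder on its left, the recursion direction on $T_1$ must flip correspondingly between leftmost-root deletion and rightmost-root deletion, and I must verify that when the direction flips no forest-distance value that has already been overwritten is ever demanded again. I expect to resolve this by checking that every cross-phase dependency of Equation~\ref{eqn:forest-dist} is either a subtree distance --- safely resident in $D_t$ --- or a forest distance belonging to the same phase, so that the set of simultaneously live forest distances never exceeds the $O(mn)$ boundary-pair window. With that established, the total space is $O(mn)$ for $D_t$ plus $O(mn)$ for $D_f$, proving the claim; this is consistent with the $1 + \log_2 n$ bound on H-collapsed depth from Lemma~\ref{lem:num-h-keyroots}, which controls the running time but leaves the working space at $O(mn)$.
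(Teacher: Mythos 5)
Your permanent $m \times n$ table $D_t$ for the subtree-to-subtree distances matches the paper's, but the working table $D_f$ is where the argument breaks down. The paper sizes the forest-distance store as a $2 \times m^2$ table: \emph{all} $O(m^2)$ relevant subforests of $T_1$, each identified by a \emph{pair} of boundary roots (leftmost and rightmost), crossed with only $O(1)$ consecutive subforests of $T_2$. That asymmetry is forced by Equation~\ref{eqn:forest-dist}: when the enumeration advances from a subforest $G$ of $T_2$ to the next subforest $G^{+}$ with $G = G^{+} - r(T')$, the insertion branch $d(F, G^{+} - r(T')) + \delta(\varnothing, r(T'))$ demands $d(F, G)$ for \emph{every} relevant subforest $F$ of $T_1$, so all $O(m^2)$ of those values are simultaneously live at that transition. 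Your $D_f$, indexed by a single boundary node $p$ of the $T_1$-subforest and a single boundary node $q$ of the $T_2$-subforest, maps all of these live values onto the $m$ cells $(\cdot, q)$: distinct relevant subforests of $T_1$ that share a leftmost root but have different rightmost roots collide, and the values you propose to overwrite when the prefix-suffix phase (the current rightmost root) advances are not in fact stale.

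The supporting claim that Equation~\ref{eqn:forest-dist} only touches ``neighbouring boundary pairs'' is also not right: the third branch $d(F - T, G - T') + d(T, T')$ removes an entire boundary subtree, jumping $|T'|$ positions back in the enumeration, and the first two branches shrink one forest while holding the other completely fixed, so the dependency graph spans whole phases rather than a constant-radius window around $(p, q)$. The repair is essentially to adopt the paper's indexing: give the $T_1$ dimension both boundary roots ($m^2$ slots, which is at most $mn$ since $m \leq n$) and collapse the $T_2$ dimension to the $O(1)$ subforests involved in the current H-postorder step; together with $D_t$ this yields the claimed $O(mn)$ bound.
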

\begin{proof}
We use a $2 \times m^2$ table where the $m^2$ subforests in $T_1$
are arranged in prefix-suffix or suffix-prefix order.
For $T_2$, the idea is essentially a linear-space algorithm by which
distances for only one subforest are computed and updated when
moving to the next subforest in the enumeration sequence.
The subtree-subtree distances are stored in an $m \times n$ table. 
\end{proof}

In the next section, we see how this algorithm is improved by
a strategy that finds a way to apply heavy-path decompositions on both trees.

\subsection{Heavy Paths on Both Trees} \label{sec:heavy-paths-n3}

The algorithm by Klein reduces the upper bound on the number of
separate distance computations required 
from $O(\min\{depth(T),\#leaves(T)\})$ to $O(\log |T|)$
for one tree.
This is done at the cost of having to consider all the subforests
in the other tree.
Demaine \textit{et al.}~\cite{DMRW2009} improved this strategy
by a way that applies decompositions on both trees.
By their algorithm, $d(T_1, T_2)$ is computed as follows, assuming that
$|T_1| \leq |T_2|$:

\begin{enumerate}
\item \label{step:n3-d-T1-T2-swap} If $|T_1| > |T_2|$, compute $d(T_2, T_1)$.
\item \label{step:n3-d-T1-T2-rec} Recursively, compute $d(T_1, T_2[k])$ with $k$ being
the set of nodes connecting directly to $heavy\textnormal{-}path(T_2)$
with single edges.
\item \label{step:n3-d-T1-T2-bu} Compute $d(T_1, T_2)$ 
by enumerating relevant subforests of $T_1$ in prefix-suffix
(Algorithm~\ref{alg:prefix-suffix})
or suffix-prefix postorder (Algorithm~\ref{alg:suffix-prefix}), 
and relevant subforests of $T_2$ in H-postorder
(Definition~\ref{def:h-postorder}). 
\end{enumerate}

This is a combined recursive and bottom-up procedure where
the order of subtree-subtree pairs is arranged recursively
in step~\ref{step:n3-d-T1-T2-rec}, whereas 
the forest-forest distances encountered in a subtree-subtree
distance computation, in step~\ref{step:n3-d-T1-T2-bu}, are computed
with bottom-up enumerations. 
In comparison, the algorithm by Klein consists of only 
steps~\ref{step:n3-d-T1-T2-rec} and \ref{step:n3-d-T1-T2-bu},
without step~\ref{step:n3-d-T1-T2-swap}.
Due to step~\ref{step:n3-d-T1-T2-swap}, decomposition is done on
both trees. 
Here, step~\ref{step:n3-d-T1-T2-bu} differs from the procedure
in~\cite{DMRW2009} where the computation is done with recursion.
Nonetheless, they are equivalent since the precondition, that
the subtree-subtree distances related to step~\ref{step:n3-d-T1-T2-rec} have been obtained, is the same. 
These distances are: $d(T_1[i], T_2[j])$ for all $i \in T_1$ and 
$j \in T_2 - heavy\textnormal{-}path(T_2)$.
The subtree-subtree distances obtained in step~\ref{step:n3-d-T1-T2-bu} alone are $d(T_1[i], T_2[j])$ for all $i \in T_1$ and 
$j \in heavy\textnormal{-}path(T_2)$.
Therefore, the postcondition of step~\ref{step:n3-d-T1-T2-bu} is that
$d(T_1[i], T_2[j])$ for all $i \in T_1$ and $j \in T_2$ have all
been obtained.
To adapt the procedure into a bottom-up dynamic programming algorithm,
the order of computation sequence can be obtained in advance by
running the recursion of step~\ref{step:n3-d-T1-T2-rec},
and only recording the subtree pair in step~\ref{step:n3-d-T1-T2-bu}
without actually computing the distance.
This yields the bottom-up computation sequence. 

We now consider some aspects of the algorithm.

\begin{theorem} 
The algorithm computes $d(T_1[i], T_2[j])$ for all 
$1 \leq i \leq |T_1|$ and $1 \leq j \leq |T_2|$.
\end{theorem}
\begin{proof}
We prove it by induction on the sizes of the subtrees induced by the
keyroots.

\noindent Base case: This involves only the singleton subtrees. Since all the
basic edit costs with respect to single nodes are already defined, the base case holds.

\noindent Induction hypothesis: For any 
$(i, j) \in \{(i, j) \mid 
i \in H\textnormal{-}keyroots(T_1),\
j \in H\textnormal{-}keyroots(T_2)\}$,
after step~\ref{step:n3-d-T1-T2-rec},
\begin{enumerate}
\item if $|T_1[i]| \leq |T_2[j]|$, then 
$\{d(T_1[i'], T_2[j']) \mid
i' \in T_1[i],\ j' \in T_2[j] - heavy\textnormal{-}path(T_2[j])\}$
have been computed,
\item if $|T_1[i]| > |T_2[j]|$, then 
$\{d(T_1[i'], T_2[j']) \mid
i' \in T_1[i] - heavy\textnormal{-}path(T_1[i]),\
j' \in T_2[j]\}$ have been computed.
\end{enumerate}

\noindent Induction step: We show that 
$\{d(T_1[i'], T_2[j']) \mid i' \in T_1[i],\ j' \in T_2[j]\}$
are all computed.
The subtree-subtree distances
to be computed in step~\ref{step:n3-d-T1-T2-bu} are:

\begin{enumerate}
\item $\{d(T_1[i'], T_2[j']) \mid
i' \in T_1[i],\ j' \in heavy\textnormal{-}path(T_2[j])\}$,
if $|T_1[i]| \leq |T_2[j]|$, 
\item $\{d(T_1[i'], T_2[j']) \mid
i' \in heavy\textnormal{-}path(T_1[i]),\ j' \in T_2[j]\}$,
if $|T_1[i]| > |T_2[j]|$. 
\end{enumerate}

The induction step holds since it is in accord
with the postorder that the algorithm follows, which means that all distances specified
in the induction hypothesis have been computed.
This concludes the proof.
\end{proof}

Given two subtree pairs
$(T_1[k], T_2[k'])$ and $(T_1[l], T_2[l'])$ in which
$(i, j)$ is contained, with
$k, l \in H\textnormal{-}keyroots(T_1)$,
$k', l' \in H\textnormal{-}keyroots(T_2)$, 
$k = \min\{x \mid x \in ancestors(l) \cap H\textnormal{-}keyroots(T_1)\}$, and 
$k' = \min\{x \mid x \in ancestors(l') \cap H\textnormal{-}keyroots(T_2)\}$, 
we consider the possibilities pertaining to the relative sizes
of the four trees, where we use H-keyroots to represent the
corresponding subtree sizes.
We write $k \prec l$ if $|T[k]| > |T[l]|$.

\begin{enumerate}
\item $k \prec l \prec k' \prec l'$, as in Figure~\ref{fig:kl_1}, \label{case:kl-1}
\item $k \prec k' \prec l \prec l'$, as in Figure~\ref{fig:kl_2}, \label{case:kl-2}
\item $k \prec k' \prec l' \prec l$, as in Figure~\ref{fig:kl_3}. \label{case:kl-3}
\end{enumerate} 

\begin{figure}[htbp]
  \begin{center} 
  \subfigure[]{\label{fig:kl_1}\scalebox{0.7}{\input{kl-1.pstex_t}}} \qquad\qquad 
\subfigure[]{\label{fig:kl_2}\scalebox{0.7}{\input{kl-2.pstex_t}}} \qquad\qquad \subfigure[]{\label{fig:kl_3}\scalebox{0.7}{\input{kl-3.pstex_t}}}
  \end{center}
  \caption{Depiction of possible cases for relative subtree sizes due to
  heavy-path decomposition. A line between two size levels (thick lines) indicates that a distance computation is performed for subtrees
  of corresponding size levels. Here, $k, l \in H\textnormal{-}keyroots(T_1)$, and
$k', l' \in H\textnormal{-}keyroots(T_2)$. 
$T_1[k]$ and $T_2[k']$ decompose once to yield $T_1[l]$ and $T_2[l']$, 
respectively.
} \label{fig:kl}
\end{figure}

The distance computations in which a pair of nodes $(i, j)$ would participate
are represented by solid lines drawn between the size levels as shown
in Figure~\ref{fig:kl}. These situations arise as a result of only the
larger subtree being allowed to decompose. If we count the number of
enumeration steps involving $(i, j)$, the analysis is as follows.
We enumerate each node in the larger tree once, while
enumerate each node in the smaller subtree
a number of times no more than the size of the subtree.
This way of counting regarding the smaller subtree is based on 
how many subforests with distinct leftmost roots may include the node
as the rightmost root, or symmetrically, 
how many subforests with distinct rightmost roots may include the node
as the leftmost root.
For case~\ref{case:kl-1}, i.e., $k \prec l \prec k' \prec l'$, $i$ is counted
once for $k$ and $l$ each, while $j$ is counted $|T_2[k']|$ steps for $k'$, for a total of
$2|T_2[k']|$ steps.
For case~\ref{case:kl-2}, i.e., $k \prec k' \prec l \prec l'$, $(i, j)$ is counted
$1 \times |T_2[k']|$ steps for $(k, k')$, $|T_1[l]| \times 1$ steps for $(l, k')$, and
$1 \times |T_2[l']|$ steps for $(l, l')$, for a total of $|T_2[k']|+|T_1[l]|+|T_2[l']|$
steps. 
For case~\ref{case:kl-3}, i.e., $k \prec k' \prec l' \prec l$,
$(i, j)$ is counted $1 \times |T_2[k']|$ steps for $(k, k')$,
$|T_1[l]| \times 1$ steps for $(l, k')$, and $|T_1[l]| \times 1$ steps for
$(l, l')$, for a total of $|T_2[k']|+2|T_1[l]|$ steps. 
In the time complexity analysis, the steps in case~\ref{case:kl-3} can be
bounded by replacing $l$ by $u$ where $k' \prec u \prec l'$, which
results in the two lines incident to $l$ being replaced by the
two lines incident to $u$, returning back to case~\ref{case:kl-2}. This means that in the time complexity analysis, we only need to consider steps from 
case~\ref{case:kl-1} and case~\ref{case:kl-2}, as well as their
symmetrical counterparts. 
Figure~\ref{fig:n3-size-mapping} illustrates a situation
where $(i, j)$ are enumerated as a pair
in the worst case (i.e., $1+\log_2 m$ and $1+\log_2 n$ levels, respectively)
with respect to the sizes of the subtrees in which
$(i, j)$ are contained.

\begin{figure}[htbp]
  \begin{center}
  \scalebox{1}{\input{n3-size-mapping-3.pstex_t}}
  \end{center}
  \caption{Depiction of the situation where $(i, j)$ are enumerated as a pair
  in the worst case (i.e., $1+\log_2 m$ and $1+\log_2 n$ levels, respectively)
  with respect to the sizes of subtrees in which
  $(i, j)$ may be contained. Levels of different sizes are represented
  by thick lines. A line is drawn between two size levels to indicate
  inclusion of $(i, j)$ where an arrowhead points to the smaller size.
  For size levels no more than $m$,
  two types of arrowheads (filled and hollow) are used to distinguish
  between alternative sequences of decompositions where the same sequence can
  be traced by following the lines with same type of arrowheads.}
  \label{fig:n3-size-mapping}
\end{figure}

The following lemma is based on an observation that is crucial
in obtaining the claimed time bound. 

\begin{lemma} \label{lem:sum-sizes}
Let $W = \{w_1, w_2, \cdots, w_k\}$ be a list of numbers satisfying
that for any $w_i, w_j \in W$, $w_j \leq \frac{w_i}{2}$ if $j = i+1$. Then, for any $u$,
$S = \sum_{i=u}^{k}w_i \leq 2w_u$.
\end{lemma}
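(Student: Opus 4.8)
The plan is to reduce the statement to a geometric-series bound. The hypothesis says exactly that each term in the list is at most half its immediate predecessor, i.e. $w_{i+1} \leq w_i/2$ whenever both indices are present. The first step is therefore to iterate this one-step halving into a closed-form bound relative to the anchor index $u$: I would prove by induction on $t \geq 0$ that $w_{u+t} \leq w_u / 2^{t}$. The base case $t=0$ is the trivial equality $w_u = w_u$, and for the inductive step I apply the hypothesis $w_{u+t+1} \leq w_{u+t}/2$ together with the induction hypothesis $w_{u+t} \leq w_u/2^{t}$ to get $w_{u+t+1} \leq w_u/2^{t+1}$.

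With this per-term bound in hand, the second step is to substitute it into the sum and compare against a geometric series. Reindexing with $t = i - u$, I would write
\[
S = \sum_{i=u}^{k} w_i \;\leq\; \sum_{t=0}^{k-u} \frac{w_u}{2^{t}} \;=\; w_u \sum_{t=0}^{k-u} \frac{1}{2^{t}} \;\leq\; w_u \sum_{t=0}^{\infty} \frac{1}{2^{t}} \;=\; 2 w_u .
\]
The final inequality extends the finite sum to the full geometric series (valid because every term $1/2^{t}$ is nonnegative), and the last equality is the standard evaluation $\sum_{t \geq 0} 2^{-t} = 2$. This chain yields $S \leq 2 w_u$, which is the claim.

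I do not anticipate a genuine obstacle here, as the argument is entirely elementary; the only point worth stating explicitly is that the quantities $w_i$ are nonnegative (indeed they represent subtree sizes in the intended application), which is what licenses bounding the finite partial sum by the infinite sum. If one prefers to avoid invoking the infinite series, an equivalent alternative is to use the exact finite identity $\sum_{t=0}^{k-u} 2^{-t} = 2 - 2^{-(k-u)} < 2$, which gives the strict-then-relaxed bound $S < 2w_u \leq 2w_u$; I would likely present the infinite-sum version since it is cleaner and the lemma only asks for the non-strict inequality.
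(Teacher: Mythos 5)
Your proof is correct and follows essentially the same route as the paper's: iterate the halving hypothesis to get $w_{u+t} \leq w_u/2^{t}$ and then bound the sum by the geometric series $\sum_t 2^{-t} \leq 2$ (the paper uses the finite identity $\sum_{i=0}^{n}2^{-i} = 2 - 2^{-n} \leq 2$, which you mention as your alternative). Your version is slightly more explicit about the induction and about the nonnegativity assumption, but there is no substantive difference.
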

\begin{proof}
Recall that $R = \sum_{i=0}^{n}\frac{1}{2^i} \leq 2$ for any $n \geq 0$,
which is proved by showing that $2R-R = 2 - \frac{1}{2^n} \leq 2$.
Therefore,
we have
$S = \sum_{i=u}^{k}w_i \leq w_{u}\sum_{i=0}^{k-u}\frac{1}{2^i} \leq 
2w_u$.
\end{proof}

The following theorem gives the result for the time complexity
of the algorithm. 

\begin{theorem}
The tree edit distance problem can be solved in $O(m^2n(1+\log\frac{n}{m}))$ time where $|T_1|=m$, $|T_2|=n$, and $m \leq n$.
\end{theorem}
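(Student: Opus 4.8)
The plan is to express the total running time as a count of constant-time forest-distance evaluations and to bound it by summing, over every node pair $(i,j)$ with $i\in T_1$ and $j\in T_2$, the number of times $(i,j)$ is enumerated as a pair. First I would record, exactly as in the case analysis preceding Lemma~\ref{lem:sum-sizes}, that whenever a subtree pair $(T_1[a],T_2[b])$ containing $(i,j)$ is processed in step~\ref{step:n3-d-T1-T2-bu}, the node lying in the larger subtree is enumerated once (it is a relevant root exactly once in the H-postorder sweep of that subtree) while the node lying in the smaller subtree is enumerated at most $\min\{|T_1[a]|,|T_2[b]|\}$ times (the number of relevant subforests with distinct opposite roots containing it). Hence the contribution of this level to $(i,j)$ is at most $\min\{|T_1[a]|,|T_2[b]|\}$, and summing over all $i$ and all $j$ at a fixed processed pair reproduces the per-pair cost $O(|T_{\mathrm{small}}|^2\,|T_{\mathrm{large}}|)$. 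It therefore suffices to bound $C(i,j):=\sum_{\text{levels}}\min\{|T_1[a]|,|T_2[b]|\}$ uniformly by $O\!\bigl(m(1+\log\tfrac nm)\bigr)$ and then multiply by the $mn$ pairs.

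Next I would analyse the sequence of levels containing a fixed $(i,j)$, as depicted in Figure~\ref{fig:n3-size-mapping}. It starts at the root pair $(T_1,T_2)$ of sizes $(m,n)$, and at each level only the larger current subtree is decomposed, advancing one step along its chain of H-keyroots. By Lemma~\ref{lem:h-keyroots-sizes} the successive subtree sizes on each side drop by at least a factor of two, and by Lemma~\ref{lem:num-h-keyroots} the $T_1$-chain and $T_2$-chain have at most $1+\log_2 m$ and $1+\log_2 n$ entries. I would then split the level sequence into an unbalanced phase and a balanced phase. In the unbalanced phase the $T_2$-side size still exceeds $m$, so only the $T_2$-side is decomposed while the $T_1$-side remains equal to the whole of $T_1$; by the halving property there are at most $1+\log_2\tfrac nm$ such levels, each contributing exactly $m=\min\{m,|T_2[b]|\}$, for a total of at most $m\bigl(1+\log_2\tfrac nm\bigr)$.

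The balanced phase begins once the $T_2$-side first falls below $m$, after which both sizes are at most $m$ and the decomposed side alternates. Here I would group the levels into maximal runs (``plateaus'') on which the smaller of the two sizes is constant: on each plateau the larger side is halved repeatedly, and once the two sizes are within a factor of two a plateau has only $O(1)$ levels, so its cost is a constant times its plateau value. The plateau values are themselves entries of the two halving chains lying below the phase-entry size, so by Lemma~\ref{lem:sum-sizes} their sum is $O(m)$; the single initial plateau, on which the $T_1$-side of size $m$ is halved down past the current small $T_2$-size $q$, contributes at most $q\bigl(1+\log_2\tfrac mq\bigr)=O(m)$. Thus the balanced phase costs $O(m)$, giving $C(i,j)=O\!\bigl(m(1+\log\tfrac nm)\bigr)$ for every pair and the claimed $O\!\bigl(m^2 n(1+\log\tfrac nm)\bigr)$ bound after multiplying by $mn$. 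I expect the balanced phase to be the main obstacle: the delicate point is verifying that the plateaus do not smuggle back a logarithmic factor, that is, that once the two current sizes are comparable each plateau is only $O(1)$ levels long and the geometric summation of Lemma~\ref{lem:sum-sizes} controls the remaining total. This is also where the reduction of case~\ref{case:kl-3} to case~\ref{case:kl-2} is invoked, so that the two lines incident to the deeper keyroot are not overcounted.
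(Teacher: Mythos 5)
Your proposal is correct and follows essentially the same route as the paper: both count, for each node pair $(i,j)$, the per-level contribution $\min\{|T_1[a]|,|T_2[b]|\}$, charge at most $m\bigl(1+\log_2\frac{n}{m}\bigr)$ to the levels where the $T_2$-side still exceeds $m$ (the paper's $S_1$), and control the remaining levels via the geometric decay supplied by Lemmas~\ref{lem:h-keyroots-sizes}, \ref{lem:num-h-keyroots}, and~\ref{lem:sum-sizes}. Your plateau bookkeeping is a slightly sharper organization of the balanced phase than the paper's deliberately redundant sum over all keyroot-level combinations (its $S_2,S_3\leq 4m^2n$ bounds), but it rests on the same lemmas and yields the same bound.
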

\begin{proof}
For any $(i, j)$ where $i \in T_1$ and $j \in T_2$, we count
the number of times that $(i, j)$ is enumerated in distance
computations in all possible combinations based on the relative
sizes of the subtrees in which $i$ and $j$ are contained.
These combinations can be divided into three categories:

\begin{enumerate} 
\item \label{case-log} $(i, j) \in (T_1[h], T_2[h'])$
for some $(h, h')$, with $|T_1[h]| \leq m$ and $m < |T_2[h']| \leq n$,
\item \label{case-large-small} $(i, j) \in (T_1[h], T_2[h'])$
for some $(h, h')$, with $m \geq |T_1[h]| \geq |T_2[h']|$,
\item \label{case-small-large} $(i, j) \in (T_1[h], T_2[h'])$
for some $(h, h')$, with $|T_1[h]| \leq |T_2[h']| \leq m$.
\end{enumerate}

In the above cases, for each pair of nodes $(i, j)$ that participate
in a distance computation for a pair of subtrees,
the node in the larger subtree is counted once, while
the node in the smaller subtree is counted
a number of times no more than the size of the subtree.
This way of counting with respect to the smaller subtree is based on 
how many subforests with distinct leftmost roots may include the node
as the rightmost root, or symmetrically, 
how many subforests with distinct rightmost roots may include the node
as the leftmost root.

Let $S_1$, $S_2$, and $S_3$ be maximal numbers of total enumeration steps
corresponding to category~\ref{case-log}, \ref{case-large-small},
and \ref{case-small-large}, respectively. 

From Lemma~\ref{lem:num-h-keyroots}, a node in $T$, 
with $|T|=n \geq m$, can be in at most $1+\log_2 m$ subtrees of
sizes no more than $m$, rooted at distinct H-keyroots. 
Therefore,
$S_1 \leq m^2n(\log_2 n - \log_2 m) = m^2n\log_2\frac{n}{m}$.

For $S_2$ and $S_3$, we give a simplified analysis which
includes all combinations of which some are redundant
due to the fact that a smaller subtree does not decompose
until it becomes the larger one. 
This, however, does not change the complexity as the difference is
within a negligible factor, due to Lemma~\ref{lem:sum-sizes}. 
From Lemma~\ref{lem:h-keyroots-sizes}, 
\ref{lem:num-h-keyroots}, and \ref{lem:sum-sizes},
we have 
$S_2 \leq m(n \times 2m\sum_{i=0}^{\log_2m}\frac{1}{2^i}) \leq 4m^2n$, and
$S_3 \leq (m \times 2m\sum_{i=0}^{\log_2m}\frac{1}{2^i})n \leq 4m^2n$.
This yields a total number of steps in the worst case as
$S_1 + S_2 + S_3 = O(m^2n(1+\log\frac{n}{m}))$.

For a more accurate
estimate of $S_2$ and $S_3$ (see Figure~\ref{fig:n3-size-mapping}),
we have 
$S_2 \leq m \times nm + 2m \times (n\sum_{i=1}^{\log_2m}\frac{m}{2^i}) \leq 3m^2n$, and
$S_3 \leq mm \times n + (m\sum_{i=1}^{\log_2m}\frac{m}{2^i})\times 2n \leq 3m^2n$.
Hence, the total time is
$S_1 + S_2 + S_3 = O(m^2n(1+\log\frac{n}{m}))$.
\end{proof}

\paragraph{Remark:} It has been shown that
there exist trees for which 
$\Omega(m^2n(1+\log\frac{n}{m}))$ time is required to compute
the distance no matter what strategy is used~\cite{DMRW2009}.

\begin{theorem} 
The new algorithm solves the tree edit distance problem in $O(mn)$ space where $|T_1|=m$, $|T_2|=n$, and $m \leq n$.
\end{theorem}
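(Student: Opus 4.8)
The plan is to follow the same bookkeeping as in the space analysis of Klein's algorithm (Theorem~\ref{thm:klein-space}), separating the data into two kinds: the \emph{permanent} subtree-subtree distances, which are part of the output and are reused by later computations, and the \emph{temporary} forest-forest distances, which are consumed within a single subtree-pair computation and may be discarded afterward. First I would store all of the subtree-subtree distances $d(T_1[i],T_2[j])$ in a single $m \times n$ table $D_t$; this accounts for $O(mn)$ space and, since each entry is written once and only read thereafter, nothing in $D_t$ is ever duplicated.

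Next I would bound the temporary storage needed for the forest-forest distances. The key is that the procedure is executed as the bottom-up dynamic program described after step~\ref{step:n3-d-T1-T2-bu}: the order of subtree-pair computations is fixed in advance by simulating the recursion of step~\ref{step:n3-d-T1-T2-rec}, so at run time there is no recursion stack holding several forest tables at once --- exactly one subtree-pair computation is active at any moment. For that single active pair $(T_1[i],T_2[j])$, the bottom-up sweep of step~\ref{step:n3-d-T1-T2-bu} enumerates the subforests of the \emph{smaller} subtree in prefix-suffix or suffix-prefix order (Algorithms~\ref{alg:prefix-suffix} and \ref{alg:suffix-prefix}) and the subforests of the larger one in H-postorder (Definition~\ref{def:h-postorder}). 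As in Klein's linear-space scheme, I would keep the distances against all $O(\min\{|T_1[i]|,|T_2[j]|\}^2)$ prefix-/suffix-ordered subforests of the smaller subtree while advancing through the H-postorder enumeration one active subforest at a time, so the temporary table has size $O(\min\{|T_1[i]|,|T_2[j]|\}^2) \le O(m^2)$. Because this table is reused from one subtree pair to the next, the total temporary requirement never exceeds $O(m^2)$, and summing the two contributions gives $O(mn) + O(m^2) = O(mn)$, using $m \le n$.

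The main obstacle I expect is precisely the reuse argument of the second paragraph: a naive recursive realization of steps~\ref{step:n3-d-T1-T2-rec} and \ref{step:n3-d-T1-T2-bu} could keep a distinct forest table alive at each of the $\Theta(\log n)$ active recursion levels, which would inflate the temporary space by a logarithmic factor. The work is therefore in justifying that flattening the recursion into the precomputed bottom-up order makes the forest-table usages strictly sequential --- each completed and its output committed to $D_t$ before the next begins --- so that a single $O(m^2)$ scratch table suffices, and in checking that the role-swap of step~\ref{step:n3-d-T1-T2-swap} does not break this, since at every subproblem it is always the smaller subtree that plays the prefix-/suffix-ordered role and hence contributes at most $O(m^2)$ to the scratch table.
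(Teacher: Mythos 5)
Your proposal is correct and follows essentially the same route as the paper, which simply reuses the scheme of Theorem~\ref{thm:klein-space}: an $m \times n$ table for the subtree-subtree distances plus an $O(m^2)$ scratch table holding the prefix-/suffix-ordered subforest distances of the (always smaller, hence size at most $m$) subtree while the larger one is swept in H-postorder one subforest at a time. Your additional care about flattening the recursion into a sequential bottom-up order and about the role swap of step~\ref{step:n3-d-T1-T2-swap} only makes explicit what the paper leaves implicit.
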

\begin{proof}
The method is same as that described in Theorem~\ref{thm:klein-space}. 
\end{proof}

The efficiency of the algorithm can be tightened up by combining all three path
decomposition strategies (i.e., leftmost, rightmost, and heavy paths)
to yield an algorithm with the least total enumeration steps.
The basic idea, while retaining the general framework of the algorithm, is to recursively count the number of enumeration steps
resulted from different types of path decompositions
without actually carrying out the distance computations within the
original algorithm. This means that for any $d(T_1[i], T_2[j])$
to be computed by the algorithm, step~\ref{step:n3-d-T1-T2-bu} counts
the number of enumeration steps involving the nodes on the path for 
$(T_1[i], T_2[j])$ with respect to
each type of decomposition, while the steps involving other nodes
that do not belong to
the path are counted recursively at step~\ref{step:n3-d-T1-T2-rec}
(i.e., one recursive call for each type of path)
and combined with the counts in step~\ref{step:n3-d-T1-T2-bu} so
as to decide which path to use for that level.
The results from each level are recorded into a table which
consumes $O(mn)$ space. The recorded information is then
used to guide the distance computations in the selection of
strategy at each step. 
This yields an overall least total number of enumeration steps
with respect to all strategies considered.
The time complexity, however, remains the same due to the above
remark regarding the lower bound. 

\section{Conclusions} \label{sec:conclusion}
This article considers the tree edit distance problem and 
formulation of solutions in the form of recursion. 
In particular, 
a class of algorithms based on closely related decomposition
schemes for computing the tree edit distance between two
ordered trees are reviewed, with an attention to aspects of
time complexity analysis. 

As a summary of the contents presented in 
Section~\ref{sec:strategies}, we recapture the related path-decomposition
strategies as follows.

\paragraph{\textnormal{Leftmost paths:}}
$d(T_1, T_2)$ is computed as follows.
\begin{enumerate}
\item Recursively, compute $d(T_1[k], T_2)$ and $d(T_1, T_2[k'])$, with 
$k$ being
the set of nodes connecting directly to $leftmost\textnormal{-}path(T_1)$
with single edges, whereas
$k'$ being
the set of nodes connecting directly to $leftmost\textnormal{-}path(T_2)$
with single edges.
\item Compute $d(T_1, T_2)$ 
by enumerating relevant subforests of $T_1$ and $T_2$ in LR-postorder. 
\end{enumerate}

\paragraph{\textnormal{Heavy paths on one tree:}}
$d(T_1, T_2)$ is computed as follows.
\begin{enumerate}
\item Recursively, compute $d(T_1, T_2[k])$ with $k$ being
the set of nodes connecting directly to $heavy\textnormal{-}path(T_2)$
with single edges.
\item Compute $d(T_1, T_2)$ 
by enumerating relevant subforests of $T_1$ in prefix-suffix
or suffix-prefix postorder, 
and relevant subforests of $T_2$ in H-postorder. 
\end{enumerate}

\paragraph{\textnormal{Heavy paths on both trees:}}
$d(T_1, T_2)$ is computed
as follows, assuming that $|T_1| \leq |T_2|$.
\begin{enumerate}
\item If $|T_1| > |T_2|$, compute $d(T_2, T_1)$.
\item Recursively, compute $d(T_1, T_2[k])$ with $k$ being
the set of nodes connecting directly to $heavy\textnormal{-}path(T_2)$
with single edges.
\item Compute $d(T_1, T_2)$ 
by enumerating relevant subforests of $T_1$ in prefix-suffix
or suffix-prefix postorder, 
and relevant subforests of $T_2$ in H-postorder. 
\end{enumerate}

All of the above strategies can be equivalently
stated as applying
Equation~\ref{eqn:forest-dist} according to predefined directions without recursing into subproblems
already computed. 

\bibliographystyle{plain}
\bibliography{allbibs}

\begin{thebibliography}{10}

\bibitem{Bille:2005}
P.~Bille.
\newblock A survey on tree edit distance and related problems.
\newblock {\em Theoretical Computer Science}, 337:217--239, 2005.

\bibitem{Chen-Zhang:TED:2014}
S.~Chen and K.~Zhang.
\newblock An improved algorithm for tree edit distance with applications to rna
  comparison.
\newblock {\em Journal of Combinatorial Optimization}, 27(4):778--797, 2014.

\bibitem{wchen:2001}
W.~Chen.
\newblock New algorithm for ordered tree-to-tree correction problem.
\newblock {\em Journal of Algorithms}, 40(2):135--158, 2001.

\bibitem{DMRW2009}
E.~D. Demaine, S.~Mozes, B.~Rossman, and O.~Weimann.
\newblock An optimal decomposition algorithm for tree edit distance.
\newblock {\em ACM Transactions on Algorithms (TALG)}, 6(1):2:1--2:19, 2009.

\bibitem{Dulucq:Touzet:2005}
S.~Dulucq and H.~Touzet.
\newblock Decomposition algorithms for the tree edit distance problem.
\newblock {\em Journal of Discrete Algorithms}, 3:448--471, 2005.

\bibitem{Harel-Tarjan:SIAM1984}
D.~Harel and R.~E. Tarjan.
\newblock Fast algorithms for finding nearest common ancestors.
\newblock {\em SIAM Journal of Computing}, 13(2):338--355, 1984.

\bibitem{klein:98}
P.~N. Klein.
\newblock Computing the edit-distance between unrooted ordered trees.
\newblock In {\em Proceedings of the 6th European Symposium on
  Algorithms(ESA)}, pages 91--102, 1998.

\bibitem{TED:Pawlik-Augsten:2012}
M.~Pawlik and N.~Augsten.
\newblock Rted: A robust algorithm for the tree edit distance.
\newblock In {\em Proceedings of the 38th International Conference on Very
  Large Data Bases}, volume~5, pages 334--345, 2012.

\bibitem{TED:Pawlik-Augsten:2014}
M.~Pawlik and N.~Augsten.
\newblock A memory-efficient tree edit distance algorithm.
\newblock In {\em Proceedings of the 25th International Conference on Database
  and Expert Systems Applications (LNCS)}, volume 8644, pages 196--210, 2014.

\bibitem{Sleator:Tarjan:1983}
D.~D. Sleator and R.~E. Tarjan.
\newblock A data structure for dynamic trees.
\newblock {\em Journal of Computer and System Sciences}, 26:362--391, 1983.

\bibitem{Tai}
K.~Tai.
\newblock The tree-to-tree correction problem.
\newblock {\em Journal of the Association for Computing Machinery (JACM)},
  26(3):422--433, 1979.

\bibitem{StringED}
R.~A. Wagner and M.~J. Fischer.
\newblock The string-to-string correction problem.
\newblock {\em Journal of the ACM}, 21(1):168--173, 1974.

\bibitem{Zhang:Shasha:TreeEditDistance:SIAMJC1989}
K.~Zhang and D.~Shasha.
\newblock Simple fast algorithms for the editing distance between trees and
  related problems.
\newblock {\em SIAM Journal on Computing}, 18(6):1245--1262, 1989.

\end{thebibliography}
\end{document}